\newtheorem{proposition}{Proposition}
\newtheorem{theorem}{Theorem}
\newtheorem{lemma}{Lemma}
\DeclareMathOperator{\ASN}{ASN}
\DeclareMathOperator{\SNR}{SNR}
\newcommand{\DKL}[2]{D_\text{KL}(#1 \,\Vert\, #2)}
\newcommand{\DKLmax}[2]{D_\text{KL}^\Theta(#1 \,\Vert\, #2)}
\newcommand{\DTV}[2]{D_\text{TV}(#1 , #2)}
\newcommand{\DTVmax}[2]{D_\text{TV}^\Theta(#1 , #2)}
\newcommand{\Rbb}{\mathbb{R}}
\newcommand{\Qbb}{\mathbb{Q}}
\newcommand{\Acal}{\mathcal{A}}
\newcommand{\Fcal}{\mathcal{F}}
\newcommand{\Hcal}{\mathcal{H}}
\newcommand{\Kcal}{\mathcal{K}}
\newcommand{\Xcal}{\mathcal{X}}
\newcommand{\q}{\mathsf{q}}
\begin{document}

\title{On Optimal Quantization in Sequential Detection}

\author{
    Michael~Fau{\ss},~\IEEEmembership{Member,~IEEE,}
    Manuel~S.~Stein,
    and~H.~Vincent~Poor,~\IEEEmembership{Fellow,~IEEE}% <-this % stops a space
    \thanks{M.~Fau{\ss} and H.~V.~Poor are with the Department of Electrical and Computer Engineering, Princeton University, Princeton, NJ, 08544 USA. Email: \{mfauss, poor\}@princeton.edu}% <-this % stops a space
    \thanks{M.~S.~Stein is with the {Technische Universit\"at M\"unchen, 80333 M\"unchen}, Germany. Email: manuel.stein@tum.de}% <-this % stops a space
    \thanks{This work was in part funded by the Deutsche Forschungsgemeinschaft (DFG, German Research Foundation) -- grant no.\ 424522268 and 413008418.}
}

% make the title area
\maketitle

\begin{abstract}
  The problem of designing optimal quantization rules for sequential detectors is investigated. First, it is shown that this task can be solved within the general framework of active sequential detection. Using this approach, the optimal sequential detector and the corresponding quantizer are characterized and their properties are briefly discussed. In particular, it is shown that designing optimal quantization rules requires solving a nonconvex optimization problem, which can lead to issues in terms of computational complexity and numerical stability. Motivated by these difficulties, two performance bounds are proposed that are easier to evaluate than the true performance measures and are potentially tighter than the bounds currently available in the literature. The usefulness of the bounds and the properties of the optimal quantization rules are illustrated with two numerical examples.
\end{abstract}

% Note that keywords are not normally used for peerreview papers.
\begin{IEEEkeywords}
  Sequential detection, active detection, optimal quantization rules, performance bounds.
\end{IEEEkeywords}

\section{Introduction}

\IEEEPARstart{S}{equential} detectors are well-known for being highly efficient in terms of the number of required samples and, as a consequence, for minimizing the average decision delay in time-critical applications. In his seminal book \cite{Wald1947}, Wald showed that, compared to fixed sample size tests, sequential tests can reduce the average sample number (ASN) by a factor of approximately two. In general, their ability to adapt to the observed data makes sequential procedures more flexible than procedures whose sample size is chosen \emph{a priori}. Comprehensive overviews of sequential hypothesis testing and related topics can be found in \cite{Ghosh1991, Poor2009, Tartakovsky2014}, to name just a few.

In practice, however, sample efficiency in terms of the ASN is by no means the only criterion for the design of detectors. Instead, many potentially conflicting design goals need to be taken into account, such as power, material, thermal, and cost efficiency. For many of these system design goals, the analog-to-digital converters (ADCs) prior to the digital signal processing hardware have been identified as a common bottleneck \cite{KeningtonAstier2000, Singh2009, Ulbricht2012, Meng2021}. In particular for small, battery powered devices the energy consumption, the occupied chip area, and the costs of high-resolution ADCs are often too high \cite{Alioto2017}. Moreover, the latency requirements of future communication and sensing systems will be increasingly difficult to meet with high-resolution ADCs \cite{Khilo2012}. While further advances in A/D conversion technology could help to overcome these challenges, an alternative approach is to reduce the resolution of the ADCs. On the one hand, as a rule of thumb, reducing the resolution by one bit, halves the resources required for the A/D conversion \cite{MurmannSurvey}. On the other hand, using low-resolution ADCs diminishes the quality of the acquired measurement data and, therefore, affects the performance of the digital signal processing units. Hence, a natural question to ask is in how far the performance penalty incurred by coarsely quantized raw data can be compensated by smart signal processing.

In this paper, signal quantization during data acquisition is considered as part of the system design and optimized according to a criterion connected to a specific statistical processing task. In contrast, classic approaches rely on signal quality measures which establish a relation between the analog input and the digital output of the quantization operation (e.g. minimum distortion \cite{Max1960}). While such an approach decouples the design of the signal acquisition devices from the subsequent digital processing chain and thus leads to systems that cope well with a variety of data analysis tasks, it does not achieve the optimal solution in systems that are limited to a specific application or inference task. For such specialized systems, the data acquisition can be adapted to the data processing task, leading to resource savings or performance improvements. This paper addresses this aspect for the case of sequential detection with two simple hypotheses.

Quantized (sequential) detection is a well-known problem in the literature \cite{TantaratanaThomas1977, LeeThomas1981, Blum1995, WillettSwaszek1995, Stein2018a}. However, in most works the focus is on scenarios in which the quantizers are either given or fixed, that is, the quatization rules are identical for every sample. This type of quantizer is known to be sub-optimal for sequential detectors \cite{Nguyen2006}. The reason for this is that in a sequential setting the quantizer, similar to the sample size, should adapt to the data observed so far, meaning that the quantization rules are updated after every observation. To the best of our knowledge, this kind of adaptive quantizer and its potential benefits have received little attention in the literature. 

The contribution of this paper is twofold. First, the task of optimal quantized sequential detection is solved by embedding it in the framework of \emph{active} sequential detection, which dates back to a seminal paper by Chernoff \cite{Chernoff1959} and was studied in detail in a series of more recent papers \cite{NaghshvarJavidi2010, NaghshvarJavidi2011, NaghshvarJavidi2013, NaghshvarJavidi2013a}. This allows for characterizing an optimal quantizer via a nonlinear Bellman equation, and for some insights into its properties and design challenges. Second, two bounds on the performance of quantized sequential tests are derived and are shown to be potentially tighter than the bounds currently available in the literature.

The paper is organized as follows: In Sec.~\ref{sec:problem_formulation}, the assumptions discussed above are stated more formally and the quantized sequential detection problem is formulated. In Sec.~\ref{sec:optimal_tests}, the solution of this problem is obtained by reducing it to a special case of a known problem in active sequential detection. Subsequently, some noteworthy properties of the optimal quantizer are discussed, and the difficulties in its design are highlighted. The latter motivate the need for performance bounds that are easier to evaluate in practice. Two such bounds are presented and discussed in Sec.~\ref{sec:performance_bounds}. The usefulness of the bounds as well as the properties of the optimal quantization rules are illustrated with two numerical examples in Sec.~\ref{sec:examples}. Sec.~\ref{sec:conclusion} concludes the paper.

\section{Problem Formulation}
\label{sec:problem_formulation}

Let $\bm{X} = X_1, X_2, \ldots$ be a sequence of independent and identically distributed (i.i.d.) random variables that follow a distribution $P$ and take values in a measurable space $(\Xcal, \Fcal)$. The first $n$ elements of the sequence, $X_1, \ldots, X_n$, are denoted by $\bm{X}_n$. The detection problem we address is to decide between two simple hypotheses about the true distribution $P$: 
\begin{equation}
  \begin{aligned}
    \Hcal_0 \colon P &= P_0, \\ 
    \Hcal_1 \colon P &= P_1.
  \end{aligned}
  \label{eq:hypotheses}
\end{equation}
To guarantee that $\Hcal_0$ and $\Hcal_1$ are separable, it is assumed that $D_\text{KL}(P_0 \Vert P_1) > 0$, where $D_\text{KL}(P_0 \Vert P_1)$ denotes the Kullback--Leibler (KL) divergence between $P_0$ and $P_1$.

In what follows, it is assumed that the detector does not have access to the raw observations $\bm{X}$, but to a $K$-quantized version $\bm{Y}$. To clarify, a random variable is referred to as $K$-quantized if it only takes values in the set $\Kcal = \{1, 2, \ldots, K\}$. A function that maps from $\Xcal$ to $\Kcal$ is referred to as a $K$-quantizer. More precisely, a $K$-quantizer is defined as a measurable function $\q_\theta \colon \mathcal{X} \to \mathcal{K}$, where the subscript $\theta$ denotes the free design parameters of the quantizer. The set of all feasible parameters is denoted by $\Theta$. For example, if a $K$-quantizer is implemented via $K-1$ comparators with arbitrary reference levels, then $\theta \in \mathbb{R}^{K-1}$ corresponds to the thresholds of the comparators. In cases where the $K$-quantizer is realized by $K-1$ comparators with fixed reference levels, $\theta \in \mathbb{R}$ could be the gain factor at the input of the quantizer. In order to avoid technical difficulties, $\Theta$ is assumed to be compact in what follows. The random variable at the output of the quantizer is denoted by $Y = \q_\theta(X)$ and its distribution by $Q_\theta$. Finally, for a given quantization parameter $\theta$, $\Kcal^+(\theta)$ is shorthand for the support $Q_\theta$, that is, $\Kcal^+(\theta) = \{ k \in \Kcal : Q_\theta(k) > 0 \}$.

Now, consider the problem of jointly designing a $K$-quantizer and a sequential test for the two hypotheses in \eqref{eq:hypotheses}, where the test is assumed to have access only to the sequence of quantized observations $\bm{Y} = Y_1, Y_2, \ldots$. In contrast to the majority of works on quantized detection \cite{TantaratanaThomas1977, WillettSwaszek1995, Nguyen2006, Nguyen2008, Ciuonzo2013, TengErtin2013, WangMei2013, SteinFauss2018, SteinFauss2019}, the quantizer here is not necessarily fixed during the observation period. Instead, the quantization parameter $\theta$ is allowed to be updated after every sample. Mathematically, this problem translates to finding a sequence of quantization rules, $\bm{\eta} = \bigl( \eta_n \bigr)_{n \geq 0}$, that define the quantization parameter and two sequences of stopping and decision rules, $\bm{\psi}$ and $\bm{\delta}$, of the sequential test. More precisely, each quantization rule $\eta_n$ is of the form
\begin{equation}
  \eta_n \colon \Kcal^n \to \Theta
  \label{eq:quant_rule}
\end{equation}
and maps the quantized measurements $(Y_1, \ldots, Y_n)$ to some quantization parameter $\theta \in \Theta$. To clarify, the rule $\eta_n$ determines the parameters used to quantize $X_{n+1}$ and is allowed to depend on all $Y_1, \ldots, Y_n$. The joint distribution of $\bm{Y}$ under $\Hcal_0$ and $\Hcal_1$ that is induced by the sequence of quantization rules $\bm{\eta}$ is denoted by
\begin{equation}
  \Qbb_{0,\bm{\eta}} = \prod_{n \geq 0} Q_{0, \eta_n} \quad \text{and} \quad \Qbb_{1,\bm{\eta}} = \prod_{n \geq 0} Q_{1, \eta_n},
\end{equation}
respectively. The notation
\begin{equation}
  \Qbb_{\kappa,\bm{\eta}} = (1-\kappa) \Qbb_{0,\bm{\eta}} + \kappa \Qbb_{1,\bm{\eta}} 
\end{equation}
is used to denote the mixture distribution corresponding to a Bayesian setting in which $\Hcal_1$ occurs with prior probability $\kappa$. As will become clear shortly, this makes it possible to cover both Bayesian and Neyman--Pearson tests in a unified framework.  
 
The stopping and decision rules, $\psi_n$ and $\delta_n$, are assumed to be of the form
\begin{equation}
  \psi_n, \delta_n \colon \Kcal^n \to [0,1],
  \label{eq:decision_rule}
\end{equation}
that is, they map the quantized observations $(y_1, \ldots, y_n)$ to a probability of stopping the test and a probability of accepting hypothesis $\Hcal_1$, respectively. The stopping time of the quantized sequential test is denoted by $\tau$ and is defined as the fist time instant at which a Bernoulli random experiment with success probability $\psi_n$ results in a ``success''. 

Using this notation, the $K$-quantized sequential detection problem can be written as
\begin{equation}
  \begin{aligned}
    \min_{\bm{\eta}} \; \min_{\bm{\delta}, \bm{\psi}} \; E_{\Qbb_{\kappa, \bm{\eta}}}\bigl[ \tau(\bm{Y}) \bigr] \quad \text{s.t.} \quad E_{\Qbb_{0, \bm{\eta}}}\bigl[ \delta_\tau(\bm{Y}) \bigr] &\leq \alpha, \\
     E_{\Qbb_{1, \bm{\eta}}}\bigl[ 1 - \delta_\tau(\bm{Y}) \bigr] &\leq \beta,
  \end{aligned}
  \label{eq:seq_test_constr}
\end{equation}
where the operator $E_P$ denotes the expected value with respect to a distribution $P$, the quantizations rule $\bm{\eta}$ is as in \eqref{eq:quant_rule}, the stopping and decision rules $\bm{\delta}$ and $\bm{\psi}$ are as in \eqref{eq:decision_rule}, and the variables $\alpha, \beta \in (0,1)$ denote the targeted error probabilities. In what follows, the ASN resulting from the solution of \eqref{eq:seq_test_constr} is denoted by $\ASN_{\kappa, \Theta}^*(\alpha, \beta)$. Note that for $\kappa = 0$, the ASN is minimized under $\Hcal_0$, for $\kappa = 1$ under $\Hcal_1$ and for $k \in (0,1)$ under a Bayesian setting, where $\kappa$ corresponds to the prior probability of $\Hcal_1$.

The formulation of the constrained optimization problem in \eqref{eq:seq_test_constr} is due to Wald \cite{Wald1945} and is arguably the most common approach in sequential detection. However, in order to simplify the analysis, it is convenient to convert \eqref{eq:seq_test_constr} to the unconstrained optimization problem
\begin{equation}
  \min_{\bm{\eta}} \; \min_{\bm{\delta}, \bm{\psi}} \; J_{\kappa, \bm{\lambda}}(\bm{\eta}, \bm{\delta}, \bm{\psi}), 
  \label{eq:seq_test}
\end{equation}
where
\begin{align}
  J_{\kappa, \bm{\lambda}}(\bm{\eta}, \bm{\delta}, \bm{\psi}) = E_{\Qbb_{\kappa, \bm{\eta}}}\bigl[ \tau(\bm{Y}) \bigr] &+ \lambda_0 E_{\Qbb_{0, \bm{\eta}}}\bigl[ \delta_\tau(\bm{Y}) \bigr] \notag \\ 
  &+ \lambda_1 E_{\Qbb_{1, \bm{\eta}}}\bigl[ 1-\delta_\tau(\bm{Y}) \bigr],
  \label{eq:weighted_cost}
\end{align}
and $\bm{\lambda} = (\lambda_0, \lambda_1)$ are positive cost coefficients. In a Bayesian setting, these coefficient capture the prior probabilities and the cost of the respective detection errors. In a Neyman--Pearson setting, it can be shown that if $\bm{\lambda}$ is chosen such that a solution for \eqref{eq:seq_test} exists which meets the error probabilities $\alpha$ and $\beta$ in \eqref{eq:seq_test_constr} with equality, then the corresponding test also solves \eqref{eq:seq_test_constr}; see \cite{Novikov2009} and \cite{FaussZoubir2015} for a more formal discussion. In what follows, the minimum cost resulting from the solution of \eqref{eq:seq_test} is denoted by $J_{\kappa, \Theta}^*(\bm{\lambda})$.

\section{Optimal Quantized Sequential Detection}
\label{sec:optimal_tests}

In this section, the optimal quantizer in the sense of the minimization problem \eqref{eq:seq_test} is characterized and some of its properties are discussed. The following proposition is a consequence of the more general result in \cite{NaghshvarJavidi2013}, which is specialized to the case of quantized sequential detection here. 

\begin{proposition}
  Let $\bm{\lambda}$, $P_0$ and $P_1$ be given and let
  \begin{equation}
    r_\kappa(z) = (1-\kappa) + \kappa z.
  \end{equation}
  The functional equation 
  \begin{gather}
    \rho_\Theta(z) = \min \left\{ \lambda_0 \,,\, z \lambda_1 \,,\, r_\kappa(z) + \min_{\theta \in \Theta} D_{\rho_\Theta}(z; \theta) \right \},
    \label{eq:rho} \\
    D_{\rho_\Theta}(z; \theta) = \sum_{k \in \mathcal{K}_0^+(\theta)} \rho_\Theta \biggl( z \frac{Q_{1,\theta}(k)}{Q_{0,\theta}(k)} \biggr) Q_{0,\theta}(k),
    \label{eq:D_rho}
  \end{gather}
  has a unique solution $\rho_\Theta \colon \mathbb{R}_+ \to \mathbb{R}_+$, $\Rbb_+$ being the nonnegative real numbers, and it holds that
  \begin{equation}
    J_{\kappa, \Theta}^*(\bm{\lambda}) = \min_{\bm{\eta}} \, \min_{\bm{\delta}, \bm{\psi}} \, J_{\kappa, \bm{\lambda}}(\bm{\eta}, \bm{\delta}, \bm{\psi}) = \rho_\Theta(1).
  \end{equation} 
  Moreover, if there exists a function $\eta^* \colon \mathbb{R}_+ \to \Theta$ such that
  \begin{equation}
    \min_{\theta \in \Theta} \; D_{\rho_\Theta}(z, \theta) = D_{\rho_\Theta}(z, \eta^*(z))
    \label{eq:opt_parameter}
  \end{equation}
  for all $z \in \mathbb{R}_+$, then a sufficient condition for the sequence of quantization rules $\bm{\eta}$ to be optimal is that for every $n > 0$ 
  \begin{equation}
    \eta_n = \eta^*(z_n),
    \label{eq:quantization_rules}
  \end{equation}
  where $z_n$ is defined recursively via
  \begin{equation}
    z_{n+1} = z_n \, \ell_{\theta_n^*}(Y_{n+1}), \quad z_0 = 1,
  \end{equation}
  and
  \begin{equation}
    \ell_{\theta}(Y) = \frac{Q_{1, \theta}(Y)}{Q_{0, \theta}(Y)}.
  \end{equation}
  The corresponding optimal testing policy is given by
  \begin{equation}
    z_n \begin{cases}
      \geq A, & \text{stop with decision for } \Hcal_1\\
      \leq B, & \text{stop with decision for } \Hcal_0\\
      \text{otherwise}, & \text{continue}
    \end{cases},
    \label{eq:thresholds}
  \end{equation}
  where $A, B > 0$ implicitly depend on $\bm{\lambda}$ and $\kappa$. 
  \label{th:main_result}  
\end{proposition}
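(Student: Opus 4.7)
The plan is to recognize that, after reducing the sample cost and terminal costs to expectations under $\Qbb_{0,\bm{\eta}}$, the joint optimization in \eqref{eq:seq_test} becomes an infinite-horizon Markov decision problem in which the quantizer parameter $\theta$ plays the role of the control action and the likelihood ratio $z_n$ is the state. The proposition then follows either by verifying the Bellman equation directly or, more economically, by invoking the general active-detection result of \cite{NaghshvarJavidi2013}. The preliminary work is to translate the quantized detection objective into the form used there.

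The first step I would take is a change of measure to express the objective in \eqref{eq:weighted_cost} entirely as an expectation with respect to $\Qbb_{0,\bm{\eta}}$. Writing $\tau = \sum_{n \geq 0} \mathbf{1}_{\tau > n}$ and using $d\Qbb_{1,\bm{\eta}}/d\Qbb_{0,\bm{\eta}} = z_n$ on $\{\tau > n\}$, the sample-count term becomes $\sum_{n \geq 0} E_{\Qbb_{0,\bm{\eta}}} \bigl[ \mathbf{1}_{\tau > n} \bigl( (1-\kappa) + \kappa z_n \bigr) \bigr]$, which produces the per-step cost $r_\kappa(z)$, while the error terms reduce to the terminal costs $\lambda_0$ and $\lambda_1 z$ at stopping. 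A standard sufficiency argument then shows that $z_n$ is a sufficient statistic for the cost-to-go, so admissible policies can be restricted to deterministic functions of $z_n$ without loss of optimality.

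With this reduction, the Bellman equation for the minimal cost-to-go starting from state $z$ reads exactly \eqref{eq:rho}: the three arguments of the outer minimum correspond to the three admissible actions at state $z$, namely stop and decide $\Hcal_1$ (terminal cost $\lambda_0$), stop and decide $\Hcal_0$ (terminal cost $\lambda_1 z$), or pay the per-step cost $r_\kappa(z)$ and continue with a quantizer $\theta \in \Theta$, in which case the expected future cost is $D_{\rho_\Theta}(z;\theta)$. Compactness of $\Theta$, together with continuity of $Q_{i,\theta}$ in $\theta$, guarantees that the inner minimum is attained, and a measurable selector $\eta^*(z)$ can then be constructed via a measurable-selection theorem, which yields \eqref{eq:opt_parameter}--\eqref{eq:quantization_rules}. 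The threshold form \eqref{eq:thresholds} of the stopping rule is just a rewriting of the Bellman optimal-action rule and follows because $\lambda_1 z$ is strictly increasing while $\lambda_0$ is constant and the continuation cost is concave in $z$, so the sets $\{\rho_\Theta(z) = \lambda_0\}$ and $\{\rho_\Theta(z) = \lambda_1 z\}$ are of the form $[A,\infty)$ and $(0,B]$.

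The main obstacle is establishing existence and uniqueness of the solution to the functional equation \eqref{eq:rho}. The natural route is to interpret its right-hand side as a Bellman operator $T$ acting on bounded, concave, nondecreasing functions $\rho \colon \Rbb_+ \to \Rbb_+$ dominated by $\min(\lambda_0, \lambda_1 z)$, and to show that $T$ is a contraction under a weighted supremum norm that tames the unboundedness of $z$, invoking the Banach fixed-point theorem. Verifying that $T$ preserves the function class (monotonicity, concavity, and the pointwise upper bound) is routine but technical, and compactness of $\Theta$ is needed to keep the inner minimum measurable in $z$. Since the proposition is presented as a specialization of \cite{NaghshvarJavidi2013}, the cleanest finish is to exhibit the bijection between the quantizer parameter $\theta \in \Theta$ here and the experiment action in the active-detection framework there, confirm that the cost structures agree term-by-term, and cite the existence/uniqueness conclusion from that reference rather than re-proving it.
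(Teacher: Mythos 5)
Your proposal is correct and follows essentially the same route as the paper: the paper's proof is precisely the reduction to the active sequential detection framework of Naghshvar--Javidi (their Fact~1), with the change of background measure to $\Qbb_{0,\bm{\eta}}$ explaining why the state is the likelihood ratio $z$ rather than the posterior and why the per-step cost $r_\kappa(z)$ appears --- details you correctly work out explicitly where the paper only remarks on them. Your sketched contraction-mapping argument for existence and uniqueness is a reasonable self-contained alternative, but your recommended finish (verify the term-by-term correspondence and cite the reference) is exactly what the paper does.
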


\begin{IEEEproof}
  The proposition follows directly from Fact~1 in \cite{NaghshvarJavidi2013} by identifying the following equivalences: $M = 2$, $\lambda_0 = \pi_0 L$, $\lambda_1 = \pi_1 L$, $a = \theta$, $\Acal_M = \Theta$, $\Omega_M = \{0, 1\}$, $\rho_1 = 1/(1+z)$, and $\rho_2 = z/(1+z)$, with $M, \pi_0, \pi_1, L, a, \Acal_M, \Omega_M, \rho_1$ and $\rho_2$ as defined in \cite{NaghshvarJavidi2013}. The two main differences when comparing Proposition~\ref{th:main_result} to Fact~1 in \cite{NaghshvarJavidi2013} is that, first, $\kappa$ in Proposition~\ref{th:main_result} can be chosen freely, while in \cite{NaghshvarJavidi2013} it is fixed to $\kappa = \pi_1$, with $\pi_1$ being the prior probability of $\Hcal_1$, and, second, that in Proposition~\ref{th:main_result} $\Qbb_{0, \bm{\eta}}$ is used as a background measure, while in \cite{NaghshvarJavidi2013} $\Qbb_{\pi_1, \bm{\eta}}$ is used. As a consequence, $z$ in Proposition~\ref{th:main_result} denotes the likelihood ratio, while $\rho_1, \rho_2$ in \cite{NaghshvarJavidi2013} denote the posterior probabilities of the corresponding hypotheses. Finally, note that $r_\kappa$ in \eqref{eq:rho} arises because $\Qbb_{\kappa, \bm{\eta}}$ does not coincide with the background measure, unless $\kappa = 0$. In \cite{NaghshvarJavidi2013}, both measures are identical so that $r_\kappa$ simplifies to $r_\kappa = 1$. 
  
  For more details on how to convert sequential detection problems into sequential Markov decision making problems in general see \cite{Novikov2009, FaussZoubir2015, Fauss2020}. Based on the latter, arriving at \eqref{eq:rho} and \eqref{eq:D_rho} is an application of Dynamic Programming \cite[Ch.~9]{BertsekasShreve2007}.
\end{IEEEproof}

Proposition~\ref{th:main_result} can be interpreted as follows: At every time instant $n$, the parameters of the optimal quantizer are chosen such that the distributions of the quantized observations under the hypotheses $\mathcal{H}_0$ and $\mathcal{H}_1$ are \emph{least similar} to each other. The measure quantifying this similarity is $D_{\rho_\Theta}$ in \eqref{eq:D_rho}, which is, up to a negative scaling factor, the $f$-divergence induced by the function $-\rho_\Theta$. Moreover, the argument of the $f$-divergence is \emph{weighted} by the likelihood ratio $z$, which indicates the current preference for either hypothesis. 

\begin{figure*}[tb]
  \centering
  \includegraphics{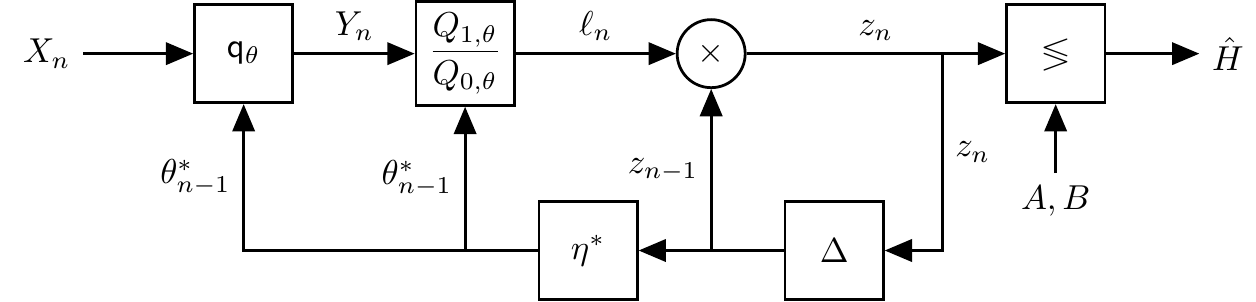}
  \caption{Block diagram of the optimal quantizer characterized in Proposition~\ref{th:main_result}. The $\Delta$-block indicates a unit time delay, and $\hat{H} \in \{0, 1\}$ denotes the inferred hypothesis.}
  \label{fig:quantizer}
\end{figure*}

The structure of the optimal quantizer is illustrated by the block diagram in Fig.~\ref{fig:quantizer}. The raw observation $X_n$ is first quantized using the parameters $\theta_{n-1}$ that were obtained in the previous time slot. Second, the likelihood ratio $\ell_n$ of the quantized observation $Y_n$ is evaluated and multiplied with the likelihood ratio of the previous observations. The resulting likelihood ratio $z_n$ is then compared to the decision thresholds $A$ and $B$. If the test continues, $z_n$ is fed back into the function $\eta^*$, whose output determines the next quantization parameters $\theta_n$. This process repeats until the hypothesis test stops.

Some noteworthy properties of the optimal quantized sequential test are summarized below.

\begin{itemize}
  \item The optimal quantization rules in \eqref{eq:quantization_rules} are time-homogeneous, meaning that $\eta^*$ does not depend on the time instant $n$. This is a consequence of the assumption that the random sequence $\bm{X}$ is i.i.d.\ and, hence, time-homogeneous as well. An extension to more general time-homogeneous Markov processes is possible, but introduces additional complications, such as multi-dimensional test statistics; for more details on this particular aspect see \cite{Novikov2009a, Fauss2020}. Time-homogeneity also implies that the optimal quantizer is memory-less in the sense that at any given time instant knowledge of the test statistic $z_n$ is sufficient to determine the optimal quantization parameter for the next sample. 
  
  \item In practice, the optimal quantizer is implemented as follows: First, the equations in \eqref{eq:rho} and \eqref{eq:D_rho} are solved for $\rho_\Theta$. This step is computationally expensive, but can be performed offline. Once the function $\rho_\Theta$ is known, there are two possible ways of implementing the quantizer: First, the minimization in \eqref{eq:opt_parameter} can be solved for all $z$ in a suitably chosen grid on the interval $[B, A]$. The minimizing parameters for each grid point are stored in a lookup table that is then used to approximate the quantization rule $\eta^*$ in the feedback loop of the detector. Alternatively, the minimization in \eqref{eq:opt_parameter} can be solved online, at each time instant $n$, using the exact value of the test statistic $z_n$. This implementation is preferable in terms of minimizing the numerical errors in the optimal quantizer design. However, since solving \eqref{eq:opt_parameter} after every sample is computationally demanding, the implementation based on a lookup table is more appealing in practice.
  
  \item A well-known approach to the design of quantizers for detection problems is to use a statistical distance or divergence as a surrogate objective, that is, the quantizer is chosen such that it maximizes a suitable statistical distance/divergence between the post-quantizer distributions \cite{PoorThomas1977, Poor1988, OrsakParis1995}. For sequential tests, it can be shown that for $\bm{\lambda} \to \infty$, which corresponds to vanishing error probabilities, the optimal cost in \eqref{eq:seq_test} is propositional to a weighted sum of inverses of KL divergences, more precisely (compare \cite{Brodsky2008}),
  \begin{equation}
    \rho_\Theta(1) \propto \min_{\theta \in \Theta} \; \frac{1-\kappa}{D_\text{KL}(Q_{0,\theta} \Vert Q_{1,\theta})} + \frac{\kappa}{D_\text{KL}(Q_{1,\theta} \Vert Q_{0,\theta})},
    \label{eq:asym_parameter}
  \end{equation}
  This implies that for small error probabilities the optimal quantizaton parameter becomes independent of $z$ and needs to be chosen such that it minimizes the right-hand side of \eqref{eq:asym_parameter}; this minimizer is denoted by $\theta^\dagger$ in what follows. In light of \eqref{eq:asym_parameter}, the performance gain when using a strictly optimal, adaptive quantizer can be expected to be most pronounced for small to medium sample sizes; this issue will be illustrated with numerical examples in Section~\ref{sec:examples}.
  
  \item In contrast to the standard sequential probability ratio test \cite{Wald1945}, the optimal quantized sequential test does not jointly minimize the expected sample size under $\Hcal_0$ and $\Hcal_1$. This is a consequence of the fact that $f$-divergences are in general not symmetric, that is, $D_f(P_0 \Vert P_1) \neq D_f(P_1 \Vert P_0)$. For example, in the asymptotic case, the quantization parameter that minimizes the expected sample size under $\Hcal_0$ is determined by $D_\text{KL}(Q_{0,\theta} \Vert Q_{1,\theta})$ and the one that minimizes the expected sample size under $\Hcal_1$ by $D_\text{KL}(Q_{1,\theta} \Vert Q_{0,\theta})$. In the nonasymptotic case a similar effect occurs, but the dependence on the targeted hypothesis is more subtle, in the sense that it not only affects the order of the arguments but also the $f$-divergence itself. 
  
  \item The optimization problem in \eqref{eq:opt_parameter}, which determines the optimal quantization parameter, is nonconvex. It can be shown that $\rho_\Theta$ is a concave function \cite{NaghshvarJavidi2013, FaussZoubir2015} such that the problem in \eqref{eq:opt_parameter} consists of finding distributions that minimize a concave function. Due to the presence of local minima, optimization problems of this kind are notoriously hard to solve and require the use of global optimization algorithms, such as cutting-plane or branch-and-bound methods \cite{Horst1984, Horst2000}, or even exhaustive search algorithms. In any case, $\Theta$ either needs to be chosen small enough to allow for a global search or the potential performance loss incurred by a local search has to be tolerated. Moreover, \eqref{eq:opt_parameter} being nonconvex implies that the function $\eta^*$ is not necessarily continuous, meaning that small changes in $z$ can lead to large changes in the resulting quantization parameters. Together with specific examples, this aspect will be discussed in more detail in Section~\ref{sec:examples}.  
  
  \item An extension of Proposition~\ref{th:main_result} to multiple hypotheses is straightforward, but will not be addressed in this paper. In a nutshell, in the multi-hypothesis case $\rho_\Theta$ in \eqref{eq:rho} becomes a function of multiple likelihood ratios and, in turn, the $f$-divergence in \eqref{eq:D_rho} becomes an $f$-dissimilarity; compare \cite{Novikov2009a, NaghshvarJavidi2013}. Apart from this, the proposition carries over unaltered.
  
  \item There is a close conceptual connection between quantized sequential detection and robust sequential detection \cite{Fauss2020}. In robust sequential detection, the problem is to identify a \emph{least favorable} pair of distributions among all distributions that are feasible under $\Hcal_0$ and $\Hcal_1$. In quantized sequential detection, the problem is to identify a \emph{most favorable} pair of post-quantizer distributions that can be realized by varying the design parameter $\theta$. Despite this similarity the challenges in both problems are quite different. In robust detection, finding the least favorable distributions corresponds to minimizing an appropriately chosen $f$-divergence, which is a convex problem. However, the minimax objective introduces a coupling between the least favorable distributions and the optimal stopping rule, which can lead to the latter requiring a nontrivial randomization \cite{FaussPoor2020}. In quantized sequential detection, as mentioned above, the main problem is that finding a maximizer of an $f$-divergence is a nonconvex optimization problem.
\end{itemize}

\section{Performance Bounds}
\label{sec:performance_bounds}

While the optimal quantized test is completely specified by Proposition~\ref{th:main_result}, in practice equation \eqref{eq:rho} is hard to solve for $\rho_\Theta$ numerically or analytically. Hence, the question arises how to bound the performance of sequential detectors with optimally quantized observations. Despite this fact, only few bounds on $\rho_\Theta$ can be found in the literature. Most works focus on fixed quantization rules \cite{TantaratanaThomas1977a, LeeThomas1981, ChandramouliRanganathan1998}, on the performance gap between the quantized test and its unquantized counterpart \cite{TantaratanaThomas1977, SteinFauss2019, UcuncuYilmaz2018}, or they consider distributed scenarios, in which the sensors do not have sufficient information to implement an optimal quantizer \cite{Nguyen2008, TengErtin2013, WangMei2013, Li2017, LiWang2018, Zhang2020}. 

To the best of our knowledge, the most recent bounds on $\rho_\Theta$ are the ones presented in the papers \cite{NaghshvarJavidi2010, NaghshvarJavidi2011, NaghshvarJavidi2013, NaghshvarJavidi2013a}, most prominently in Sec.~2 of \cite{NaghshvarJavidi2013}. However, while these bounds are asymptotically tight, they are not specific to quantized sequential detection and, as will be shown later, can be too loose to be useful in practice.

The results presented in this section generalize two well-known bounds that were first proposed by Wald \cite{Wald1945} and Hoeffding \cite{Hoeffding1960}, respectively. However, in their standard from, both bounds only hold for i.i.d.~processes, so that it is not immediately clear if and how they can be applied to optimally quantized sequential tests, whose observations are no longer i.i.d., but are generated by Markov processes; recall the discussion in the previous section. Here, it is shown that both bounds extend to this case in a natural manner.  

Wald's and Hoeffding's bound are based on the KL divergence and the total variation (TV) distance, respectively. In a slight abuse of notation, we write the KL divergence from a Bernoulli random variable with success probability $p$ to another Bernoulli random variable with success probability $1-q$ as
\begin{equation}
  \DKL{p}{q} \coloneqq p \log \frac{p}{1-q} + (1-p) \log \frac{1-p}{q},
  \label{eq:dkl_bernoulli}
\end{equation}
for $p, q \in (0,1)$. In order to allow for $p, q \in \{0, 1\}$, we define the natural extensions
\begin{equation}
  \DKL{0}{q} = \log \frac{1}{q} \quad \text{and} \quad \DKL{1}{q} = \log \frac{1}{1-q} 
\end{equation}
for $q \in (0,1)$,
\begin{equation}
  \DKL{p}{0} = \DKL{p}{1} = \infty
\end{equation}
for $p \in (0,1)$, and
\begin{equation}
  \DKL{0}{1} = \DKL{1}{0} = 0.
\end{equation}
For $p + q < 1$, we write the TV distance of two Bernoulli random variables with success probabilities $p$ and $1-q$ as
\begin{align}
  \DTV{p}{q} \coloneqq&{} \frac{1}{2}\left( \lvert p - 1 + q \rvert + \lvert 1 - p - q \rvert \right) \\
  =&{} 1 - p - q.
  \label{eq:dtv_bernoulli}
\end{align}
In order to keep the notation compact, it is also useful to define the maximum KL divergence and TV distance of two quantized distributions, where the maximum is taken over the quantization parameter $\theta$. For two quantized distributions $Q_\theta$ and $Q'_\theta$, we write
\begin{equation}
  \DKLmax{Q}{Q'} \coloneqq \max_{\theta \in \Theta} \; \DKL{Q_\theta}{Q'_\theta} 
\end{equation}
and
\begin{equation}
  \DTVmax{Q}{Q'} \coloneqq \max_{\theta \in \Theta} \; \DTV{Q_\theta}{Q'_\theta}.
\end{equation}
We can now state the two main results of this section.

\begin{theorem}[Wald's Bound for Quantized Sequential Tests] 
  For all quantized sequential tests with type I and type II error probabilities $\alpha$ and $\beta$, respectively, it holds that
  \begin{align}
    \ASN^*_{\kappa, \Theta}(\alpha,\beta) \geq \ASN_{\kappa, \Theta}^\text{KL}(\alpha, \beta),
  \end{align}
  where
  \begin{equation}
    \ASN_{\kappa, \Theta}^\text{KL}(\alpha, \beta) = (1-\kappa) \frac{\DKL{\alpha}{\beta}}{\DKLmax{Q_0}{Q_1}} + \kappa \frac{\DKL{\beta}{\alpha}}{\DKLmax{Q_1}{Q_0}}. \label{eq:asn_bound_kl}
  \end{equation}
  \label{th:asn_bound_kl}
\end{theorem}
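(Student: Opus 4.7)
The plan is to derive, separately for each hypothesis, the two single-hypothesis bounds
\[
E_{\Qbb_{0,\bm{\eta}}}[\tau] \geq \frac{\DKL{\alpha}{\beta}}{\DKLmax{Q_0}{Q_1}}, \quad E_{\Qbb_{1,\bm{\eta}}}[\tau] \geq \frac{\DKL{\beta}{\alpha}}{\DKLmax{Q_1}{Q_0}},
\]
for every feasible test attaining the error probabilities $\alpha,\beta$, and then to combine them through the identity $E_{\Qbb_{\kappa,\bm{\eta}}}[\tau] = (1-\kappa)\,E_{\Qbb_{0,\bm{\eta}}}[\tau] + \kappa\,E_{\Qbb_{1,\bm{\eta}}}[\tau]$. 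Since the stated inequality must hold for the optimum $\ASN^*_{\kappa,\Theta}(\alpha,\beta)$, which is an infimum over feasible tests, it suffices to establish it test by test. Throughout I assume $E[\tau] < \infty$; otherwise the bound is trivial.

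I would derive the $\Hcal_0$ bound by combining two classical ingredients, suitably adapted to the non-i.i.d.\ structure of $\bm{Y}$. The first is the data processing inequality applied to the terminal decision $\hat H \in \{0,1\}$, which is a randomized function of $(Y_1,\ldots,Y_\tau)$ with distribution $\mathrm{Bernoulli}(\alpha)$ under $\Qbb_{0,\bm{\eta}}$ and $\mathrm{Bernoulli}(1-\beta)$ under $\Qbb_{1,\bm{\eta}}$. Using definition \eqref{eq:dkl_bernoulli}, this gives
\[
\DKL{\Qbb^\tau_{0,\bm{\eta}}}{\Qbb^\tau_{1,\bm{\eta}}} \geq \DKL{\alpha}{\beta}.
\]

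The second ingredient is a generalized Wald identity that upper bounds the left-hand side. Because $\theta_{n-1}=\eta_{n-1}(Y_1,\ldots,Y_{n-1})$ is $\Fcal_{n-1}$-measurable, the chain rule for Radon--Nikodym derivatives writes the log-likelihood ratio of the stopped sequence as a telescoping sum whose $n$-th term has conditional expectation $\DKL{Q_{0,\theta_{n-1}}}{Q_{1,\theta_{n-1}}} \leq \DKLmax{Q_0}{Q_1}$ under $\Qbb_{0,\bm{\eta}}$. A Wald-type identity then gives
\[
\DKL{\Qbb^\tau_{0,\bm{\eta}}}{\Qbb^\tau_{1,\bm{\eta}}} = E_{\Qbb_{0,\bm{\eta}}}\!\left[\sum_{n=1}^{\tau} \DKL{Q_{0,\theta_{n-1}}}{Q_{1,\theta_{n-1}}}\right] \leq E_{\Qbb_{0,\bm{\eta}}}[\tau]\,\DKLmax{Q_0}{Q_1}.
\]
Chaining the two displayed inequalities and dividing produces the $\Hcal_0$ bound, and the $\Hcal_1$ bound follows by exchanging the roles of $0$ and $1$.

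The main obstacle I expect is making the generalized Wald identity rigorous when the observations are not i.i.d.\ but are driven by the Markov-type feedback induced by $\bm{\eta}$. The centered process $M_n = \sum_{k=1}^{n} \bigl(\log Q_{0,\theta_{k-1}}(Y_k) - \log Q_{1,\theta_{k-1}}(Y_k) - \DKL{Q_{0,\theta_{k-1}}}{Q_{1,\theta_{k-1}}}\bigr)$ is a $\Qbb_{0,\bm{\eta}}$-martingale with respect to the natural filtration, but its increments are not uniformly bounded, so optional stopping cannot be applied to $\tau$ directly. The standard workaround is to stop at $\tau \wedge N$, which is bounded, deduce $E[M_{\tau \wedge N}]=0$, and then let $N \to \infty$ using monotone convergence on the nonnegative KL-divergence increments together with the assumed $E_{\Qbb_{0,\bm{\eta}}}[\tau] < \infty$. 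A minor side issue is that parameters $\theta \in \Theta$ for which the supports of $Q_{0,\theta}$ and $Q_{1,\theta}$ disagree produce infinite conditional KL terms; in such cases $\DKLmax{Q_0}{Q_1} = \infty$ and the bound degenerates to the trivial statement $E[\tau] \geq 0$.
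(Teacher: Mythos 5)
Your proposal is correct and follows essentially the same route as the paper's own proof: both reduce to the $\kappa=0$ and $\kappa=1$ cases and combine them convexly, lower-bound the expected stopped log-likelihood ratio by $\DKL{\alpha}{\beta}$ (your data-processing step is exactly what Lemma~\ref{lm:cond_llr_bounds} together with the recombination in Appendix~\ref{apx:asn_bound_kl} establishes via change of measure and Jensen's inequality), and upper-bound it by $E[\tau]\,\DKLmax{Q_0}{Q_1}$ through a truncation argument (Lemma~\ref{lm:tau_bound}). The only difference is packaging: the paper obtains the second step by extending the sampling past $\tau_N$ with a KL-maximizing quantizer in Wald's original manner, whereas you stop the centered martingale at $\tau \wedge N$ and pass to the limit; the two devices are standard and interchangeable here.
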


\begin{theorem}[Hoeffdings's Bound for Quantized Sequential Tests]
  Let $\alpha, \beta \in (0,1)$ be such that $\alpha + \beta < 1$. For all quantized sequential tests with type I and type II error probabilities $\alpha$ and $\beta$, respectively, it holds that
  \begin{align}
    \ASN_{\kappa,\Theta}^*(\alpha,\beta) &\geq \ASN_{\kappa, \Theta}^\text{TV}(\alpha,\beta)
  \end{align}
  where
  \begin{equation}
    \ASN_{\kappa. \Theta}^\text{TV}(\alpha, \beta) = \frac{\DTV{\alpha}{\beta}}{\DTVmax{Q_0}{Q_1}}. \label{eq:asn_bound_tv}
  \end{equation}
  \label{th:asn_bound_tv}
\end{theorem}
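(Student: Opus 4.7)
The plan is to combine a data-processing lower bound on the TV distance of the stopped processes with a chain-rule upper bound in terms of the expected sample size, mirroring the classical proof of Hoeffding's bound but extended to the Markov structure induced by the adaptive quantizer.

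First, by the variational characterization of TV distance, and using that $\delta_\tau$ is $\Fcal_\tau$-measurable with $E_{\Qbb_{0,\bm{\eta}}}[\delta_\tau] = \alpha$ and $E_{\Qbb_{1,\bm{\eta}}}[\delta_\tau] = 1-\beta$, I would obtain the data-processing inequality
\[
  \DTV{\Qbb_{0,\bm{\eta}}|_{\Fcal_\tau}}{\Qbb_{1,\bm{\eta}}|_{\Fcal_\tau}} \geq 1 - \alpha - \beta = \DTV{\alpha}{\beta},
\]
which uses \eqref{eq:dtv_bernoulli} together with the assumption $\alpha + \beta < 1$.

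Second, I would derive a one-step chain rule for the adaptive-quantization Markov chain. Under hypothesis $\Hcal_i$, the joint density on $Y_{1:n+1}$ factors as $P_{i,n+1}(y_{1:n+1}) = P_{i,n}(y_{1:n})\, Q_{i,\eta_n(y_{1:n})}(y_{n+1})$. Applying the elementary inequality $|ab - cd| \leq b|a-c| + c|b-d|$ to the two densities and integrating first over $y_{n+1}$ and then over $y_{1:n}$ yields the recursive bound
\[
  \DTV{P_{0,n+1}}{P_{1,n+1}} \leq \DTV{P_{0,n}}{P_{1,n}} + E_{\Qbb_{i,\bm{\eta}}}\bigl[\DTV{Q_{0,\eta_n}}{Q_{1,\eta_n}}\bigr],
\]
where either choice $i \in \{0,1\}$ is admissible depending on which adjacent term is inserted in the triangle step.

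The main obstacle, and the technical core of the argument, is passing from a deterministic horizon $n$ to the random stopping time $\tau$. My plan is to run the recursion along the stopped filtration $\Fcal_{\tau \wedge n}$; since no information is accrued once the test stops, the increment in TV distance vanishes on $\{\tau \leq n\}$, so the recursion picks up an indicator $\mathbb{1}[\tau > n]$ and, after using the pointwise bound $\DTV{Q_{0,\eta_n}}{Q_{1,\eta_n}} \leq \DTVmax{Q_0}{Q_1}$ and summing over $n$, the series $\sum_{n \geq 0} \Qbb_{i,\bm{\eta}}(\tau > n) = E_{\Qbb_{i,\bm{\eta}}}[\tau]$ appears in place of $n$. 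A monotone convergence argument (justified by $\tau$ being almost surely finite whenever the ASN is finite) then yields
\[
  \DTV{\Qbb_{0,\bm{\eta}}|_{\Fcal_\tau}}{\Qbb_{1,\bm{\eta}}|_{\Fcal_\tau}} \leq \DTVmax{Q_0}{Q_1} \cdot E_{\Qbb_{i,\bm{\eta}}}[\tau]
\]
for both $i \in \{0,1\}$. Taking the convex combination with weights $1-\kappa$ and $\kappa$, combining with the data-processing lower bound from the first step, and minimizing over all admissible $(\bm{\eta}, \bm{\delta}, \bm{\psi})$ delivers $\DTV{\alpha}{\beta} \leq \DTVmax{Q_0}{Q_1} \cdot \ASN_{\kappa,\Theta}^*(\alpha,\beta)$, which rearranges to the claimed inequality.
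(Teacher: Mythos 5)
Your proof is correct, but it reaches the bound by a genuinely different route than the paper. Both arguments share the same two endpoints: a lower bound $\DTV{\Qbb_{0,\bm{\eta}}\vert_{\Fcal_\tau}}{\Qbb_{1,\bm{\eta}}\vert_{\Fcal_\tau}} \geq 1-\alpha-\beta$ and an upper bound $\DTV{\Qbb_{0,\bm{\eta}}\vert_{\Fcal_\tau}}{\Qbb_{1,\bm{\eta}}\vert_{\Fcal_\tau}} \leq E\bigl[\tau\bigr]\,\DTVmax{Q_0}{Q_1}$. The paper, following Hoeffding, phrases the first as $\alpha+\beta \geq E_{\Qbb_0^*}\bigl[\min\{1, \Qbb_1^*(\bm{Y}_\tau)/\Qbb_0^*(\bm{Y}_\tau)\}\bigr]$ (the affinity form of the same statement) and obtains the second pointwise, via $\min\{1,\prod_n \ell_n\} \geq \prod_n\min\{1,\ell_n\} \geq 1-\sum_n(1-\min\{1,\ell_n\})$, followed by a Wald-identity argument (truncate at horizon $N$, choose the post-stopping quantization rules to attain the maximal per-sample TV distance, let $N\to\infty$). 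You instead telescope the TV distance of the stopped process over time using the one-step chain rule, with the increment vanishing on $\{\tau\leq n\}$. Your route buys two things: it avoids the likelihood-ratio and product manipulations entirely, and, since the variational characterization of TV admits $[0,1]$-valued test functions, it handles randomized decision rules directly without the paper's preliminary claim that $\delta_\tau^*\in\{0,1\}$. The cost is that the passage from fixed horizons to $\Fcal_\tau$ --- the monotone-convergence step and the measurability of $\{\tau>n\}$ under randomized stopping, which requires enlarging the filtration by the independent randomization variables --- must be spelled out carefully, whereas the paper's truncation argument dispatches the optional-stopping issue more mechanically. Both yield identical constants, and both reduce the $\kappa$-dependence to the observation that the same denominator bound holds under either hypothesis.
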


Theorem~\ref{th:asn_bound_kl} is proven in Appendix~\ref{apx:asn_bound_kl} and Theorem~\ref{th:asn_bound_tv} in Appendix~\ref{apx:asn_bound_min}. Both bounds are of the same form in the sense that they bound the ASN by a ratio of $f$-divergences, where the divergence in the numerator only depends on the error probabilities of the test and the divergence in the denominator is maximized over the quantization parameter $\theta$. As will be shown later, the generalized version of Wald's bound is typically tighter. However, the generalized version of Hoeffding's bound has the advantage that it is independent of the parameter/prior probability $\kappa$ and that only one divergence maximization needs to be performed in order to evaluate it. The independence of $\kappa$ is a consequence of the fact that, in contrast to the KL divergence, the TV distance is symmetric in its arguments. Finally, note that Hoeffding's bound also applies in cases in which the post-quantizer distributions have different supports, that is, there exists at least one outcome $k \in \Kcal$ whose probability is positive under one hypothesis and zero under the other. Wald's bound becomes trivial in such cases since either $\DKLmax{Q_0}{Q_1}$ or $\DKLmax{Q_1}{Q_0}$ is infinite.

In \cite{Hoeffding1960}, Hoeffding also proposed bounds on the Bayesian risk of a sequential test under i.i.d.~assumptions. Moreover, he showed that bounds on the ASN can in fact be obtained from bounds on the Bayesian risk. However, the latter are based on a recursion similar to the one in Theorem~\ref{th:main_result}, which, for the reasons discussed in the previous section, makes it difficult to extend them to the optimal quantized test. Therefore, we propose to reverse Hoeffding's argument and bound the minimal weighted cost $J_{\kappa, \Theta}^*(\bm{\lambda})$, and in turn the Bayesian risk, by leveraging the available ASN bounds.

\begin{theorem}[Bayesian Bound for Quantized Sequential Tests] 
  For all $\bm{\lambda} > 0$ it holds that
  \begin{equation}
    J_{\kappa, \Theta}^*(\bm{\lambda}) \leq \min_{\alpha, \beta \in [0,1]} \ASN_{\kappa,\Theta}^\text{KL}(\alpha, \beta) + \lambda_0 \alpha + \lambda_1 \beta, \label{eq:bayes_bound}
  \end{equation}
  where $\ASN_{\kappa, \Theta}^\text{KL}(\alpha, \beta)$ is defined in \eqref{eq:asn_bound_kl}. This minimum exists and is unique.
  \label{th:bayes_bound}
\end{theorem}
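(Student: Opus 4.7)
The plan is to upper-bound $J^*_{\kappa, \Theta}(\bm{\lambda})$ directly by the weighted cost of a concrete, suboptimal family of adaptive sequential tests, one for each candidate error pair $(\alpha, \beta)$.

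For fixed $(\alpha, \beta) \in (0,1)^2$, I would introduce an adaptive-quantizer SPRT $T_{\alpha, \beta}$ with likelihood-ratio thresholds $A = (1-\beta)/\alpha$ and $B = \beta/(1-\alpha)$, whose quantizer parameter at each step is chosen from $\{\theta_0^\star, \theta_1^\star\}$ with $\theta_i^\star \in \arg\max_\theta D_\text{KL}(Q_{i,\theta} \Vert Q_{1-i,\theta})$, switching according to which hypothesis the current test statistic $z_n$ favors. Since $J^*$ is an infimum over all tests, the cost of any such concrete test is automatically an upper bound on $J^*$, and the task reduces to showing that the weighted cost of $T_{\alpha,\beta}$ does not exceed $\ASN^\text{KL}_{\kappa, \Theta}(\alpha, \beta) + \lambda_0 \alpha + \lambda_1 \beta$. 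Minimizing over $(\alpha, \beta)$ then yields the theorem.

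This splits into two subclaims. First, the error probabilities of $T_{\alpha, \beta}$ satisfy $\alpha(T_{\alpha, \beta}) \leq \alpha$ and $\beta(T_{\alpha, \beta}) \leq \beta$. This is Wald's fundamental inequality applied to the likelihood-ratio martingale; crucially, the martingale property is preserved under adaptive quantization because the quantizer at step $n$ depends only on past observations, so $E_{\Qbb_{0, \bm{\eta}}}\bigl[ \ell_{\theta_n}(Y_{n+1}) \mid \mathcal{F}_n \bigr] = 1$ irrespective of how $\theta_n$ is chosen. Second, the expected stopping times obey $E_{\Qbb_{0, \bm{\eta}}}[\tau] \leq \DKL{\alpha}{\beta} / \DKLmax{Q_0}{Q_1}$ and $E_{\Qbb_{1, \bm{\eta}}}[\tau] \leq \DKL{\beta}{\alpha} / \DKLmax{Q_1}{Q_0}$. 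This follows from Wald's identity for the adaptively quantized log-likelihood ratio, combined with the fact that the adaptive switching rule is designed so that each per-step KL increment attains the relevant maximum over $\theta$. Weighting by $(1-\kappa)$ and $\kappa$ then produces the required ASN bound, and combining with the error-probability inequalities yields the bound on the weighted cost.

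For existence and uniqueness of the minimizer, the objective $(\alpha, \beta) \mapsto \ASN^\text{KL}_{\kappa, \Theta}(\alpha, \beta) + \lambda_0 \alpha + \lambda_1 \beta$ is continuous on the compact square $[0,1]^2$ under the boundary conventions of Section~\ref{sec:performance_bounds}, and strictly convex on $(0,1)^2$ since the binary KL divergence $(p, q) \mapsto \DKL{p}{q}$ is strictly convex (verifiable directly from its Hessian) and the linear error-penalty terms preserve strict convexity.

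The main obstacle is the second subclaim: producing an \emph{upper} bound on the expected stopping time of the adaptive SPRT in the precise form $\DKL{\alpha}{\beta} / \DKLmax{Q_0}{Q_1}$. The standard Wald-identity argument, via Jensen's inequality applied to the log-likelihood ratio at the stopping time, yields only a \emph{lower} bound on $E[\tau]$, so a matching upper bound requires careful handling of boundary overshoot---either through an explicit overshoot correction, or by exploiting that the adaptive switching drives the log-likelihood ratio toward the appropriate threshold at the maximal per-step rate, so that the overshoot contribution can be absorbed into the stated form. This is where the delicate technical work lies, and it is also the step at which one must verify that the constructive upper-bound direction is indeed consistent with the lower-bound content of Theorem~\ref{th:asn_bound_kl}.
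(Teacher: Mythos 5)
Your proposal takes the inequality in the theorem statement at face value, but the ``$\leq$'' there is a typo: the intended (and provable) statement is $J_{\kappa,\Theta}^*(\bm{\lambda}) \geq \min_{\alpha,\beta}\, \ASN_{\kappa,\Theta}^\text{KL}(\alpha,\beta) + \lambda_0\alpha + \lambda_1\beta$, i.e.\ a \emph{lower} bound on the minimal weighted cost. This is clear from the surrounding text (the bound is obtained ``by leveraging the available ASN bounds'', which are lower bounds), from its use as a lower bound on $\rho_\Theta$ in Fig.~\ref{fig:cost_vs_l}, and from the paper's own proof, which is a three-line argument in the opposite direction: write $J^*_{\kappa,\Theta}(\bm{\lambda}) = \ASN^*_{\kappa,\Theta}(\alpha^*,\beta^*) + \lambda_0\alpha^* + \lambda_1\beta^*$ for the error probabilities $(\alpha^*,\beta^*)$ actually achieved by the optimal test, lower-bound the ASN term via Theorem~\ref{th:asn_bound_kl}, and then relax $(\alpha^*,\beta^*)$ to the minimizer over $[0,1]^2$. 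Your existence/uniqueness argument (strict joint convexity of the binary KL divergence plus compactness of $[0,1]^2$) coincides with the paper's.

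The construction you propose cannot be repaired into a proof of the ``$\leq$'' version. The obstacle you flag in your final paragraph is not a technicality but a contradiction: by Theorem~\ref{th:asn_bound_kl}, the quantity $\DKL{\alpha}{\beta}/\DKLmax{Q_0}{Q_1}$ is a \emph{lower} bound on the expected sample size of \emph{every} quantized test with error probabilities at most $(\alpha,\beta)$, so it cannot simultaneously serve as an upper bound on the ASN of your concrete SPRT $T_{\alpha,\beta}$ except asymptotically. Wald's thresholds $A=(1-\beta)/\alpha$ and $B=\beta/(1-\alpha)$ are conservative (the realized error probabilities are typically strictly smaller than the targets), the log-likelihood ratio overshoots the thresholds at stopping, and Jensen's inequality in the derivation of Lemma~\ref{lm:cond_llr_bounds} all point the same way: they push $E[\tau]$ \emph{above} $\DKL{\alpha}{\beta}/\DKLmax{Q_0}{Q_1}$, not below it. Hence your second subclaim is false in general and the upper-bound route collapses. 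Once the inequality is read as ``$\geq$'', no test construction is needed at all; the result follows directly from Theorem~\ref{th:asn_bound_kl} as sketched above.
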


Theorem~\ref{th:bayes_bound} follows directly from Theorem~\ref{th:asn_bound_kl} and is proven in Appendix~\ref{apx:bayes_bound}. In principle, an analogous cost bound based on Hoeffding's ASN bound can be obtained. However, since the latter is affine in the error probabilities, a minimization over $\alpha$ and $\beta$ always leads to minima for which $\alpha, \beta \in \{0,1\}$. Since bounds of this type are neither tight nor particularly useful, they are omitted from the theorem. The minimization in \eqref{eq:bayes_bound} can also lead to error probabilities that are exactly zero or one. However, this can only happen in pathological cases in which $\lambda_0 < 1$ or $\lambda_1 < 1$, that is, the cost for a detection error is lower than the cost of taking any samples at all. 

In the next section, the optimal quantized test characterized in Section~\ref{sec:optimal_tests} as well the bounds in this section are illustrated with numerical examples.

\section{Examples}
\label{sec:examples}

In this section, two examples are presented that illustrate some properties of the optimal quantizer in Proposition~\ref{th:main_result} as well as the performance bounds in Theorems~\ref{th:asn_bound_kl}--\ref{th:bayes_bound}. The first example considers a test for a shift in the mean of a standard normally distributed random variable, while the second example considers a shift in variance. In both examples, the quantization is performed by partitioning the sample space $\Xcal \subset \mathbb{R}$ into $K$ intervals, each corresponding to one outcome of the quantized random variable $Y$. The partition is parameterized by a vector $\theta \in \mathbb{R}^{K-1}$ with nondecreasing elements $\theta_1 \leq \theta_2 \leq \ldots \leq \theta_{K-1}$. The quantizer is accordingly defined as 
\begin{equation}
  \q_\theta(x) =  \begin{cases}
                    1, & x \leq \theta_1 \\
                    k, & \theta_{k-1} < x \leq \theta_k \\
                    K, & x > \theta_{K-1}
                  \end{cases}.
  \label{eq:quantizer}
\end{equation} 
In practice, the quantizer in \eqref{eq:quantizer} can, for example, be implemented by a cascade of $K-1$ comparators with thresholds $\theta_1, \ldots, \theta_{K-1}$. In what follows, the latter are also referred to as \emph{levels} so as to prevent confusion with the likelihood ratio \emph{thresholds} of the sequential test. Finally, in order to avoid potential issues with local minima of the objective function in \eqref{eq:opt_parameter}, the parameter space, $\Theta$, is chosen to be finite so that a global optimum can be determined by an exhaustive search. More precisely, each $\theta_k$ is an element of a grid $\{\theta_\text{min}, \theta_\text{min} + h, \theta_\text{min} + 2 h, \ldots, \theta_\text{max} \}$, with $\theta_\text{min}$, $\theta_\text{max}$, and $h > 0$ chosen appropriately. The source code of all examples is publicly available \cite{git_repo}.

\subsection{Neyman--Pearson Test for a Shift in Mean}

For the first example, the distributions $P_0$ and $P_1$ in \eqref{eq:hypotheses} are \begin{equation}
  P_0 = \mathcal{N}(0, 1) \quad \text{and} \quad P_1 = \mathcal{N}(\mu, 1),
  \label{eq:example_mean}
\end{equation}
where $\mathcal{N}(\mu, \sigma^2)$ denotes a normal distribution with mean $\mu$ and variance $\sigma^2$. Variations of this problem arise, for example, when detecting a signal of known structure in additive white Gaussian noise (AWGN). The signal-to-noise ratio (SNR) in such a scenario is typically defined as
\begin{equation}
    \SNR = 10 \log(\mu^2) = 20 \log(|\mu|).
\end{equation}

The optimal test in the sense of Proposition~\ref{th:main_result} was here designed at an $\SNR$ of \SI{0}{\decibel} using the quantizer in \eqref{eq:quantizer} with $\theta_\text{max} = -\theta_\text{min} = 2.5$ and $h = 0.01$. The ASN was minimized under the null hypothesis ($\kappa = 0$), and the coefficients $\lambda_0, \lambda_1$ were chosen such that the error probabilities are met with equality. For comparison, two tests with fixed quantization parameters were designed as well. For the first one, $\theta$ was chosen according to the asymptotic criterion in \eqref{eq:asym_parameter}, for the second one a rate-distortion optimized quantizer was used, namely, a Lloyd--Max quantizer whose levels were optimized under $P_0$. The Lloyd--Max quantizer was included since it is popular in practice and a good example for a quantizer based on an information-theoretic, largely application-independent design criterion. Finally, the lower bounds in Theorems~\ref{th:asn_bound_kl} and \ref{th:asn_bound_tv} were evaluated. 

\begin{table*}[tb]
  \centering
  \includegraphics{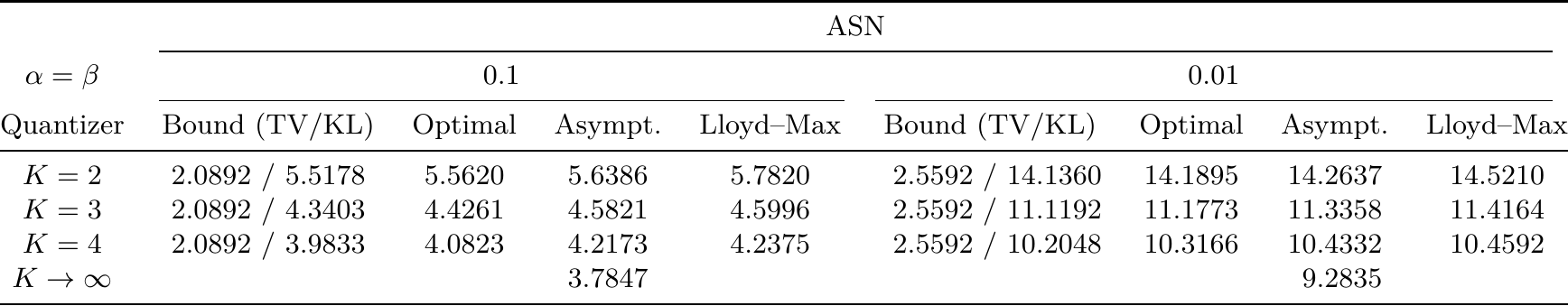}
  \caption{ASN and lower bounds of an (asymptotically) optimal quantized sequential test for a shift in mean \eqref{eq:example_mean} at an SNR of \SI{0}{\decibel}.}
  \label{tbl:asn_gauss}
\end{table*}

The ASN of all tests are shown in Table~\ref{tbl:asn_gauss} for targeted error probabilities of $0.1$ and $0.01$. Naturally, the test using the optimal quantizer performs best, however, the improvement is far from substantial. As is to be expected, the biggest relative improvements over the asymptotic and Lloyd--Max quantizer can be observed for the larger error probability, with the optimal test for $K = 3$ being about \SIrange{3}{4}{\percent} more efficient than its fixed parameter counterparts. The biggest absolute improvement over the Lloyd--Max quantizer is attained at $K = 2$ and $\alpha = \beta = 0.01$, with the optimal test reducing the ASN by approximately $0.33$ samples. These observations are in line with the theoretic findings discussed in the previous section, namely, that the most significant gains obtained by optimal quantization rules can be expected when both $K$ and the ASN are small.%
    \footnote{Similar results have also been obtained in the context of non-adaptive quantization for sequential detection. For example, Tantaratana \cite{TantaratanaThomas1977} remarks that ``[\ldots] numerical results show that the resulting average sample number is relatively invariant to the choice of the [quantization] criterion, except at a high signal-to-noise-ratio.''} 
If $K$ is large, the quantization is fine enough to capture enough information even if suboptimal quantization rules are used. If the ASN is large, the test operates in a regime in which fixed thresholds are almost optimal. Note that a large ASN arises whenever the two hypotheses are hard to distinguish with the desired degree of confidence. This can either be the case because the targeted error probabilities are small or because the SNR is low, meaning the distributions are very similar.

From Table~\ref{tbl:asn_gauss} it can also be seen that the benefit of adding more quantization levels quickly tapers off. While increasing $K$ from two to three results in an improvement of approximately \SI{20}{\percent}, increasing it to four only yields an improvement of approximately \SI{8}{\percent}. Moreover, the ASN achieved with $K = 4$ is already well within \SI{10}{\percent} of the ASN of the unquantized test ($K \to \infty$).

The proposed lower bounds are useful here as they provide tighter bounds than the unquantized test. In all cases, the generalized version of Wald's bound in Theorem~\ref{th:asn_bound_kl} is tighter than the generalized version of Hoeffding's bound in Theorem~\ref{th:asn_bound_tv}. However, as pointed out earlier, the latter is simpler to evaluate and also holds in corner cases where Wald's bound becomes trivial. Note that the bound proposed in Theorem~\ref{th:asn_bound_kl} is consistently within $0.1$ samples of the true ASN. This is particularly remarkable in light of the fact that the bound in \cite[Theorem~1]{NaghshvarJavidi2013}, which is a natural alternative, is zero for all scenarios shown in Table~\ref{tbl:asn_gauss}.%
    \footnote{This comparison is not perfectly fair. Since the bound in \cite[Theorem~1]{NaghshvarJavidi2013} only holds for Bayesian settings, it was not evaluated at $\kappa = 0$, but at $\kappa = \pi_1 = \frac{\lambda_1}{\lambda_0 + \lambda_1}$, where the latter is chosen such that $\lambda_0 = \pi_0 L$ and $\lambda_1 = \pi_1 L$ for some $L >0$.}
Both bounds will be compared in more details later in this section.

\begin{figure}[tb]
  \centering
  \includegraphics{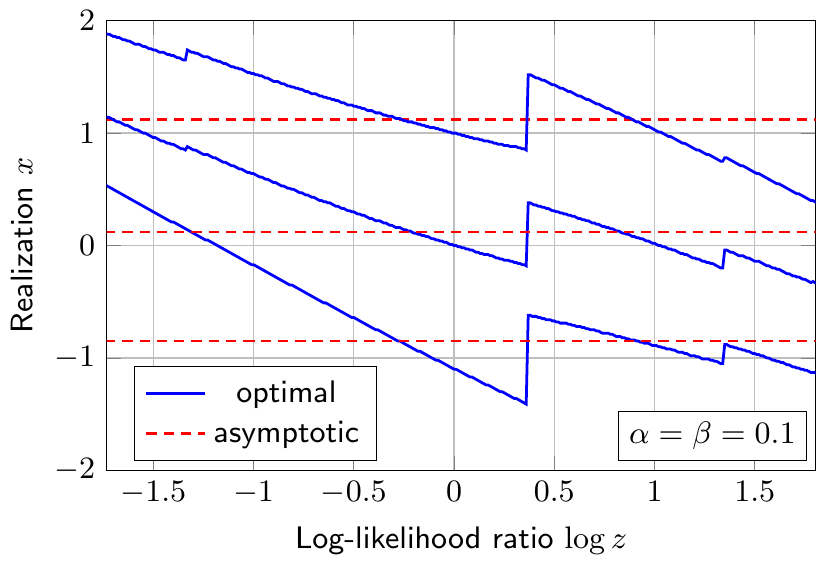}\\
  \includegraphics{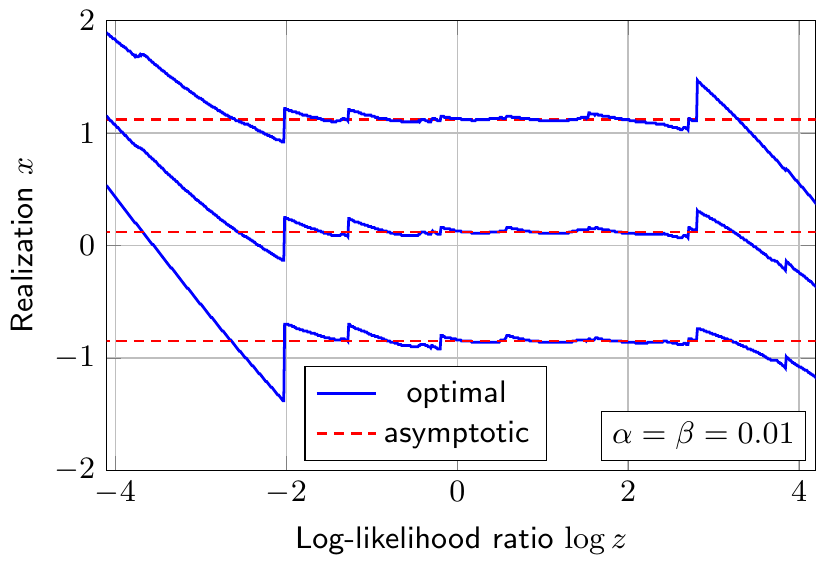}
  \caption{Optimal and asymptotically optimal quantization levels for a sequential test for a shift in mean \eqref{eq:example_mean} at an SNR of \SI{0}{\decibel}. The targeted error probabiliteis are $0.1$ (upper plot) and $0.01$ (lower plot).}
  \label{fig:thresholds_gauss}
\end{figure}

For $K = 4$, the levels of the optimal and the asymptotically optimal quantizers as functions of the log-likelihood ratio are depicted in Fig.~\ref{fig:thresholds_gauss}. Note that the start- and end-point of the abscissa correspond to the lower and upper likelihood-ratio threshold, respectively. For error probabilities of $0.1$, the deviation of the adaptive, optimal levels from the constant, asymptotic levels is clearly visible. However, for error probabilities of $0.01$, it can already be seen how the optimal quantization levels converge towards the asymptotic ones, with significant deviations only occurring in proximity to the thresholds.  

To gain a better qualitative understanding of the shape of the levels in Fig.~\ref{fig:thresholds_gauss}, it is useful to look at the extreme points. For example, consider the case in which the test statistic is close to the lower threshold, that is, $\log z \approx \log B$. Then, all realizations that further decrease $\log z$ are statistically equivalent in the sense that they lead to the decision to stop the test and accept $\Hcal_0$. Accordingly, the optimal quantizer assigns all these realizations to a single event---here, the event $x \leq 0.5$. The remaining two levels are used to quantize the complementary event, $x > 0.5$, which includes all outcomes for which the test continues. As $\log z$ increases, the region of samples that lead to a sufficiently large change in the test statistic to immediately accept $\Hcal_0$ shrinks. The lowermost level in Fig.~\ref{fig:thresholds_gauss} tracks this demarcation line between stopping and continuing, while the two upper levels keep quantizing the complement of the stopping region. However, this strategy of partitioning the sample space along the boundary of the stopping region becomes sub-optimal at some point. This transition can be observed at $\log z \approx 0.4$ in the upper plot and at $\log z \approx -2$ in the lower plot. At these points, the probability of stopping the test after observing the next sample becomes so small that reserving one of the output symbols of the quantizer exclusively for this event is no longer justified. In the lower plot, the quantizer then switches to a ``neutral'' strategy, where the quantization levels oscillate around their asymptotically optimal values and are almost independent of the test statistic. In the upper plot, however, this region does not exist. Instead, the quantizer directly switches from using a dedicated symbol for the immediate acceptance of $\Hcal_0$ to using a dedicated symbol for the immediate acceptance of $\Hcal_1$. As $\log z$ approaches the upper threshold, the corresponding level decreases until, complementary to the case in which $\log z$ approaches the lower threshold, all $x \geq 0.5$ are being assigned to a single output.

The fact that the levels of the optimal quantizer are not continuous functions of $z$ might seem counter-intuitive, considering that $\rho_\Theta$ is a continuous function of $z$ and that, in this example, $Q_{0,\theta}$ and $Q_{1,\theta}$ are continuous functions of $\theta$. However, since the optimal parameters in \eqref{eq:opt_parameter} are minimizers of a concave function, they are not necessarily continuous in $z$.%
  \footnote{For a simple example that illustrates this instability, consider the maximum of the parabola $(x-\varepsilon)^2$ on the interval $[-1, 1]$. For $\varepsilon > 0$ this maximum is attained at $x = -1$, for $\varepsilon < 0$ it is attained at $x = 1$, with a sharp transition at $\varepsilon = 0$.} 
With exception of the two discontinuities close to thresholds, the number, location and size of the discontinuities are in general hard to predict without actually solving \eqref{eq:opt_parameter}. Even then, telling discontinuities from numerical noise is not always straightforward. In the example presented here, the numerical inaccuracies are small enough such that they do not have a significant impact on the optimality. However, in general, problems with the numerical stability of the optimal parameters can be expected to arise in practice.

\subsection{Bayesian Test for a Shift in Variance}

For the second example, the distributions $P_0$ and $P_1$ in \eqref{eq:hypotheses} are
\begin{equation}
  P_0 = \mathcal{N}(0, 1) \quad \text{and} \quad P_1 = \mathcal{N}(0, 1 + \sigma^2),
  \label{eq:example_variance}
\end{equation}
with the SNR defined as
\begin{equation}
   \SNR = 10 \log(\sigma^2).
\end{equation}
This type of problem arises, for example, when detecting a random signal in AWGN  based on the difference in received signal power. 

Here, a Bayesian setting is assumed in which $\Hcal_1$ occurs with a prior probability of $\kappa = \pi_1 = 0.35$. The optimal test in the sense of Proposition~\ref{th:main_result} was again designed at an $\SNR$ of \SI{0}{\decibel} and using the quantizer in \eqref{eq:quantizer}. Since in this example the magnitude of the observations is a sufficient statistic, the quantizer is applied to the absolute value of $X$, that is, $Y = \q_\theta(\lvert X \rvert)$ with $\theta_\text{min} = 0$, $\theta_\text{max} = 3$, and $h = 0.01$. The coefficients $\bm{\lambda}$ were chosen as $\lambda_0 = \pi_0 L$ and $\lambda_1 = \pi_1 L$, where $L > 0$ is the detection error cost. This is the same model as in \cite{NaghshvarJavidi2013} so that a fair comparison of the performance bounds is possible.

\begin{figure}[tb]
  \centering
  \includegraphics{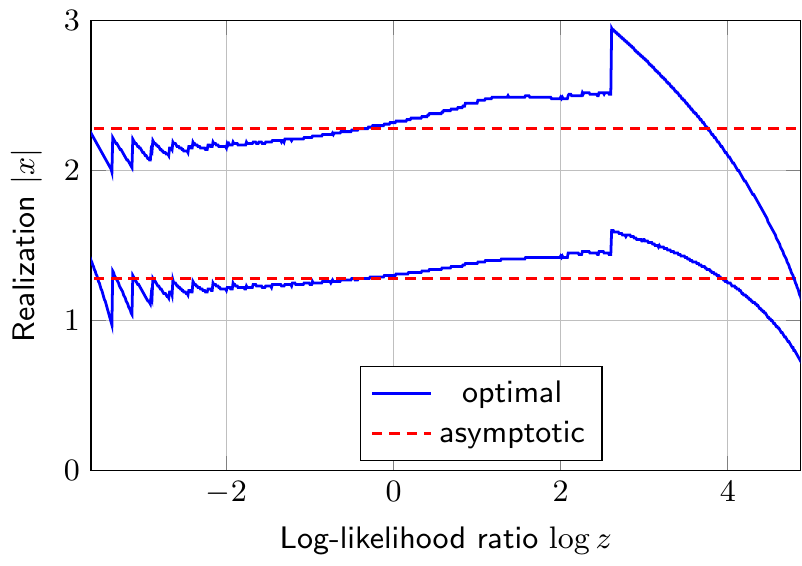}
  \caption{Optimal and asymptotically optimal quantization levels as functions of the log-likelihood ratio for a sequential test for a shift in variance \eqref{eq:example_variance} at an SNR of \SI{0}{\decibel}. Here $\kappa = \pi_1 = 0.35$ and the detection error cost is set to $L = 1000$.}
  \label{fig:thresholds_var}
\end{figure}

Before doing so, an example of the (asymptotically) optimal quantization levels for $K = 3$ and $L = 1000$ is shown in Fig.~\ref{fig:thresholds_var}. This choice of $L$ leads to a Bayesian error probability of $\approx 0.01$. The optimal quantization levels are again close to the asymptotic ones, but deviate towards the thresholds. In this particular example the deviations are smaller towards the lower threshold. The reason for this is that the log-likelihood increment is bounded from below, but not from above. Hence, while the test statistic can approach the upper threshold in ``large jumps'', it approaches the lower threshold in ``small steps''. This difference is reflected in the large bends that can be observed close to the upper thresholds and the small zigzag patterns at the opposite end of the plot. Finally, in this example, the optimal quantizer reduces the cost $J$ in \eqref{eq:weighted_cost} from $\approx 55.9359$ to $\approx 55.6196$.

\begin{figure}[tb]
  \centering
  \includegraphics{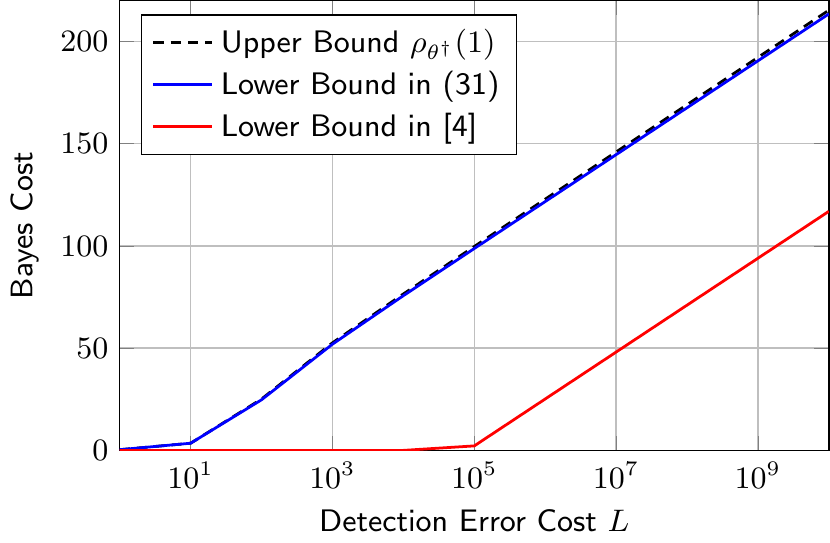}\\
  \caption{Lower bounds on $\rho_\Theta$ as functions of $L$ for $P_0$ and $P_1$ in \eqref{eq:example_variance} at an SNR of \SI{0}{\decibel}. Here $K = 4$ and $\kappa = \pi_1 = 0.35$. For comparison, the Bayesian cost of an approximately optimal test is plotted as an upper bound.}
  \label{fig:cost_vs_l}
\end{figure}

This meager improvement highlights the need for lower bounds in order to establish whether or not using an optimal quantizer is worth the effort. In Fig.~\ref{fig:cost_vs_l}, the bound in Theorem~\ref{th:bayes_bound} is compared to the bound in \cite[Theorem~1]{NaghshvarJavidi2013} for $K = 4$ and varying detection error costs $L$. By inspection, it is clear that the proposed bound is significantly tighter, not only for small, but also for large values of $L$. Clearly, for $L < 10^5$ the proposed bound is particularly useful since in this range the bound in \cite[Theorem~1]{NaghshvarJavidi2013} reduces to the trivial statement that the Bayes cost is positive. Asymptotically, as $L \to \infty$, both bounds becomes equivalent in the sense that they differ only by an additive constant. Also, evaluating the bound in Theorem~\ref{th:bayes_bound} comes at the additional cost of a numerical minimization.

\section{Conclusion}
\label{sec:conclusion}

The conclusion of this paper is that the cases in which using strictly optimal, adaptive quantizers can lead to noticeable performance gains in sequential detection are limited to those in which the ASN is low (high SNR or large error probabilities) and the quantization alphabet is small. In high ASN or large alphabet cases, the gains are bound to be marginal. Moreover, designing the optimal quantizer can require significant computational resources since it involves solving a nonconvex, potentially numerically unstable optimization problem. In practice, this should be kept in mind when choosing the parameter space of the quantizer. However, once an optimal quantizer has been designed, it can be operated at only a slight increase of complexity in form of a feedback loop with a lookup table. Moreover, the bounds given in Sec.~\ref{sec:performance_bounds} provide a useful tool to identify or rule out potential gains of adaptive quantizers in different scenarios.

\appendices

\section{Proof of Theorem~\ref{th:asn_bound_kl}}
\label{apx:asn_bound_kl}

Theorem~\ref{th:asn_bound_kl} is proven by first bounding the ASN under $\Hcal_0$ ($\kappa = 0$) and $\Hcal_1$ ($\kappa = 1$). The bound for $\kappa \in (0,1)$ then follows since in this case the ASN is a convex combination of the ASN under $\Hcal_0$ and $\Hcal_1$. 

The proof is based on two lemmas, which, in the i.i.d.\ case, have been shown by Wald. The first lemma bounds the conditional expected value of the likelihood ratio at the stopping time of the test. It is well-known to hold in the non-i.i.d.\ case and is included merely for completeness.

\begin{lemma}
  Let $\mathbb{Q}_0^*$ and $\mathbb{Q}_1^*$ be the distributions induced by the quantization rule $\eta^*$ that solves the optimization problem in \eqref{eq:seq_test_constr}, and let $\delta^*$ denote the corresponding optimal decision rule. It then holds that
  \begin{align}
    E_{\Qbb_0^*}\biggl[ \log \frac{\mathbb{Q}_0^*(\bm{Y})}{\mathbb{Q}_1^*(\bm{Y}_\tau)} \,\Big\vert\, \delta_\tau^*(\bm{Y}) = 0 \biggr] &\geq \log \frac{1-\alpha}{\beta}, \\
    E_{\Qbb_0^*}\biggl[ \log \frac{\mathbb{Q}_0^*(\bm{Y}_\tau)}{\mathbb{Q}_1^*(\bm{Y}_\tau)} \,\Big\vert\, \delta_\tau^*(\bm{Y}) = 1 \biggr] &\geq \log \frac{\alpha}{1-\beta}, \\
    E_{\Qbb_1^*}\biggl[ \log \frac{\mathbb{Q}_1^*(\bm{Y}_\tau)}{\mathbb{Q}_0^*(\bm{Y}_\tau)} \,\Big\vert\, \delta_\tau^*(\bm{Y}) = 0 \biggr] &\geq \log \frac{\beta}{1-\alpha}, \\
    E_{\Qbb_1^*}\biggl[ \log \frac{\mathbb{Q}_1^*(\bm{Y}_\tau)}{\mathbb{Q}_0^*(\bm{Y}_\tau)} \,\Big\vert\, \delta_\tau^*(\bm{Y}) = 1 \biggr] &\geq \log \frac{1-\beta}{\alpha}.
  \end{align}
  \label{lm:cond_llr_bounds}
\end{lemma}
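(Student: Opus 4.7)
The plan is to prove all four inequalities by the same change-of-measure plus Jensen's inequality argument that underlies Wald's classical likelihood ratio identity; the non-i.i.d.\ nature of the stopped sequence $\bm{Y}_\tau$ does not impede the argument because it uses only properties of the joint (stopped) measures $\mathbb{Q}_0^*$ and $\mathbb{Q}_1^*$ and of the indicator of the decision event. It suffices to prove one case in detail and observe that the remaining three follow by swapping the roles of the hypotheses and/or the conditioning event.

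To illustrate, I would prove the first inequality. Let $A = \{\bm{y} : \delta_\tau^*(\bm{y}) = 0\}$. By the type~I and type~II error constraints,
\begin{equation}
\mathbb{Q}_0^*(A) \geq 1 - \alpha \quad \text{and} \quad \mathbb{Q}_1^*(A) \leq \beta.
\end{equation}
On $A$, consider the random variable $\mathbb{Q}_1^*(\bm{Y}_\tau)/\mathbb{Q}_0^*(\bm{Y}_\tau)$, which is the Radon--Nikodym derivative $d\mathbb{Q}_1^*/d\mathbb{Q}_0^*$ restricted to the stopped $\sigma$-field $\sigma(\bm{Y}_\tau)$ (the dominance assumption coming from $\Kcal^+(\theta)$ being used in the Bellman equation \eqref{eq:D_rho}). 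The change-of-measure identity yields
\begin{equation}
E_{\Qbb_0^*}\!\left[\frac{\mathbb{Q}_1^*(\bm{Y}_\tau)}{\mathbb{Q}_0^*(\bm{Y}_\tau)} \,\Big\vert\, A\right] = \frac{\mathbb{Q}_1^*(A)}{\mathbb{Q}_0^*(A)} \leq \frac{\beta}{1-\alpha}.
\end{equation}
Applying Jensen's inequality to the convex function $-\log$ then gives
\begin{equation}
E_{\Qbb_0^*}\!\left[\log \frac{\mathbb{Q}_0^*(\bm{Y}_\tau)}{\mathbb{Q}_1^*(\bm{Y}_\tau)} \,\Big\vert\, A\right] \geq -\log \frac{\mathbb{Q}_1^*(A)}{\mathbb{Q}_0^*(A)} \geq \log \frac{1-\alpha}{\beta},
\end{equation}
which is the first inequality. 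The remaining three inequalities follow by the same chain of reasoning after replacing $A$ by $\{\delta_\tau^* = 1\}$ (whose $\mathbb{Q}_0^*$-probability is at most $\alpha$ and whose $\mathbb{Q}_1^*$-probability is at least $1-\beta$) and/or interchanging the roles of $\mathbb{Q}_0^*$ and $\mathbb{Q}_1^*$ in the change-of-measure step.

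There is no serious obstacle; the only mild subtlety worth flagging is that the likelihood ratios are evaluated at the random stopping time $\tau$, so one should verify that $\mathbb{Q}_0^*$ and $\mathbb{Q}_1^*$, interpreted as the joint laws of the stopped sequence $\bm{Y}_\tau$, are mutually absolutely continuous on the decision events under consideration. This is immediate here because the optimal quantizer chooses $\theta \in \Theta$ and the Bellman recursion in \eqref{eq:D_rho} sums over $\Kcal^+(\theta)$, so $Q_{1,\theta}$ is absolutely continuous with respect to $Q_{0,\theta}$ on the outcomes that actually occur under $\mathbb{Q}_0^*$. With this observation, Jensen's inequality applies in the finite form stated above, and no i.i.d.\ structure is required at any point in the argument.
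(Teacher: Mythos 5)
Your overall strategy---change of measure to rewrite the conditional expectation of the likelihood ratio as a ratio of event probabilities, followed by Jensen's inequality for $-\log$---is exactly the argument the paper uses, and your treatment of the first inequality is correct. However, there is a genuine gap in the claim that ``the remaining three inequalities follow by the same chain of reasoning'': with the \emph{inequality-form} error bounds you start from, the argument breaks for the second and third inequalities. Concretely, for the second inequality you would set $A' = \{\delta_\tau^* = 1\}$ and obtain, via change of measure and Jensen,
\begin{equation*}
  E_{\Qbb_0^*}\biggl[\log\frac{\Qbb_0^*(\bm{Y}_\tau)}{\Qbb_1^*(\bm{Y}_\tau)} \,\Big\vert\, A'\biggr] \;\geq\; \log\frac{\Qbb_0^*(A')}{\Qbb_1^*(A')}.
\end{equation*}
But your constraints give $\Qbb_0^*(A') \leq \alpha$ and $\Qbb_1^*(A') \geq 1-\beta$, so $\log\bigl(\Qbb_0^*(A')/\Qbb_1^*(A')\bigr) \leq \log\bigl(\alpha/(1-\beta)\bigr)$, which points the wrong way: you have lower-bounded the conditional expectation by something that is \emph{at most} the target $\log\bigl(\alpha/(1-\beta)\bigr)$, and no conclusion follows. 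The same directional failure occurs for the third inequality. Indeed, under the inequality interpretation those two bounds are simply false (take a test whose actual type~I error is far below the nominal $\alpha$).

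The fix is the one the paper uses implicitly: in the lemma and in Theorem~\ref{th:asn_bound_kl}, $\alpha$ and $\beta$ are the \emph{exact} type~I and type~II error probabilities of the test, so $\Qbb_0^*(\delta_\tau^* = 1) = \alpha$ and $\Qbb_1^*(\delta_\tau^* = 0) = \beta$ hold with equality, and hence $\Qbb_1^*(A')/\Qbb_0^*(A') = (1-\beta)/\alpha$ exactly; Jensen then delivers all four inequalities. (This is also why the paper first observes that $\delta_\tau^*$ is deterministic, so that $1-\delta_\tau^*(\bm{Y})$ is the indicator of the conditioning event and the change-of-measure identity holds exactly.) You should either state your probability identities with equality or note explicitly that inequalities two and three require it; your remark on mutual absolute continuity of the stopped measures is correct and consistent with the paper's (tacit) treatment.
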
 

\begin{proof}
  Only the first two inequalities in the lemma are proven in detail, since the third and forth can be shown in a similar manner. First, note that for a non-truncated sequential test the optimal decision rule is guaranteed to be deterministic, that is, $\delta_\tau^*(\bm{Y}) \in \{0, 1\}$. This is the case since crossing a likelihood ratio threshold implies a strict preference for the corresponding hypothesis. Now, by definition of $\Qbb_0^*$, $\Qbb_1^*$, and $\beta$ it holds that
  \begin{align}
    \Qbb_1^*\bigl[\delta_\tau^*(\bm{Y}) = 0 \bigr] &= \beta \label{eq:beta} \\
    E_{\Qbb_1^*}\bigl[1 - \delta_\tau^*(\bm{Y}) \bigr] &= \beta \\
    E_{\Qbb_0^*}\biggl[ \bigl(1 - \delta_\tau^*(\bm{Y}) \bigr) \frac{\Qbb_1^*(\bm{Y}_\tau)}{\Qbb_0^*(\bm{Y}_\tau)} \biggr] &= \beta \\
    E_{\Qbb_0^*}\biggl[\frac{\Qbb_1^*(\bm{Y})}{\Qbb_0^*(\bm{Y})} \,\Big\vert\, \delta_\tau^*(\bm{Y}) = 0 \biggr] &= \frac{\beta}{E_{\Qbb_0^*}\bigl[1 - \delta_\tau^*(\bm{Y})\bigr]} \label{eq:cond_exp_0} \\
    E_{\Qbb_0^*}\biggl[\frac{\Qbb_1^*(\bm{Y})}{\Qbb_0^*(\bm{Y})} \,\Big\vert\, \delta_\tau^*(\bm{Y}) = 0 \biggr] &= \frac{\beta}{1-\alpha} \\
    E_{\Qbb_0^*}\biggl[\log \frac{\Qbb_1^*(\bm{Y})}{\Qbb_0^*(\bm{Y})} \,\Big\vert\, \delta_\tau^*(\bm{Y}) = 0 \biggr] &\leq \log \frac{\beta}{1-\alpha} \label{eq:jensen_0} \\
    E_{\Qbb_0^*}\biggl[\log \frac{\Qbb_0^*(\bm{Y})}{\Qbb_1^*(\bm{Y})} \,\Big\vert\, \delta_\tau^*(\bm{Y}) = 0 \biggr] &\geq \log \frac{1-\alpha}{\beta}.
  \end{align} 
  Here \eqref{eq:cond_exp_0} follows from the definition of conditional expectation and the fact that, since $\delta^*$ is deterministic, $(1 - \delta_\tau^*(\bm{Y})$ is the indicator function of the event $\{ \delta_\tau^*(\bm{Y}) = 0 \}$, and \eqref{eq:jensen_0} follows from Jensen's inequality for concave functions. Analogously, it holds that
  \begin{align}
    E_{\Qbb_1^*}\bigl[\delta_\tau^*(\bm{Y}) \bigr] &= 1-\beta \label{eq:1-beta} \\
    E_{\Qbb_0^*}\biggl[ \delta_\tau^*(\bm{Y}) \frac{\Qbb_1^*(\bm{Y}_\tau)}{\Qbb_0^*(\bm{Y}_\tau)} \biggr] &= 1-\beta \\
    E_{\Qbb_0^*}\biggl[\frac{\Qbb_1^*(\bm{Y})}{\Qbb_0^*(\bm{Y})} \,\Big\vert\, \delta_\tau^*(\bm{Y}) = 1 \biggr] &= \frac{1-\beta}{E_{\Qbb_0^*}\bigl[\delta_\tau^*(\bm{Y}) \bigr]} \\
    E_{\Qbb_0^*}\biggl[\frac{\Qbb_1^*(\bm{Y})}{\Qbb_0^*(\bm{Y})} \,\Big\vert\, \delta_\tau^*(\bm{Y}) = 1 \biggr] &= \frac{1-\beta}{\alpha} \\
    E_{\Qbb_0^*}\biggl[\log \frac{\Qbb_1^*(\bm{Y})}{\Qbb_0^*(\bm{Y})} \,\Big\vert\, \delta_\tau^*(\bm{Y}) = 1 \biggr] &\leq \log \frac{1-\beta}{\alpha} \\
    E_{\Qbb_0^*}\biggl[\log \frac{\Qbb_0^*(\bm{Y})}{\Qbb_1^*(\bm{Y})} \,\Big\vert\, \delta_\tau^*(\bm{Y}) = 1 \biggr] &\geq \log \frac{\alpha}{1-\beta}.
  \end{align}  
  The remaining two inequalities can be shown by replacing $\Qbb_0^*$ with $\Qbb_1^*$ in \eqref{eq:beta} and \eqref{eq:1-beta} and following the same steps.
\end{proof}

The second lemma provides a bound on the ASN in terms of the expected value of the likelihood ratio at the stopping time. It can be considered the counterpart of Lemma~2 in \cite{Wald1945} for quantized sequential tests. 

\begin{lemma}
  Let $\mathbb{Q}_0^*$ and $\mathbb{Q}_1^*$ be the distributions induced by the quantization rule $\eta^*$ that solves the optimization problem in \eqref{eq:seq_test_constr}. It then holds that
  \begin{equation}
  	E_{\Qbb_0^*}\bigl[ \tau \bigr] \geq \frac{\displaystyle E_{\Qbb_0^*}\biggl[ \log \frac{\Qbb_0^*(\bm{Y}_\tau)}{\Qbb_1^*(\bm{Y}_\tau)} \biggr]}{\DKLmax{Q_0}{Q_1}} 
  	\label{eq:tau_bound_0}
  \end{equation}
  and
  \begin{equation}
  	E_{\Qbb_1^*}\bigl[ \tau \bigr] \geq \frac{\displaystyle E_{\Qbb_1^*}\biggl[ \log \frac{\Qbb_1^*(\bm{Y}_\tau)}{\Qbb_0^*(\bm{Y}_\tau)} \biggr]}{\DKLmax{Q_1}{Q_0}}
  	\label{eq:tau_bound_1} 
  \end{equation}
  \label{lm:tau_bound}
\end{lemma}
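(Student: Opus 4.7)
My plan is to adapt Wald's classical argument to the adaptive quantization setting by exploiting the fact that the optimal quantization parameter at step $n$ is measurable with respect to the $\sigma$-algebra generated by the past observations. I will only sketch the argument for \eqref{eq:tau_bound_0}; the bound \eqref{eq:tau_bound_1} follows by swapping the roles of $\Qbb_0^*$ and $\Qbb_1^*$ throughout. I may assume $\DKLmax{Q_0}{Q_1} < \infty$, since the lemma is otherwise trivial.

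First, since $\Qbb_0^*$ and $\Qbb_1^*$ factorize according to the quantization rule $\bm{\eta}^*$, I would decompose the log-likelihood ratio at the stopping time as the telescoping sum
\begin{equation*}
  \log \frac{\Qbb_0^*(\bm{Y}_\tau)}{\Qbb_1^*(\bm{Y}_\tau)} = \sum_{n=1}^{\tau} Z_n, \quad Z_n := \log \frac{Q_{0, \theta_{n-1}^*}(Y_n)}{Q_{1, \theta_{n-1}^*}(Y_n)},
\end{equation*}
where $\theta_{n-1}^* = \eta_{n-1}^*(Y_1, \ldots, Y_{n-1})$ is measurable with respect to $\Fcal_{n-1} := \sigma(Y_1, \ldots, Y_{n-1})$. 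Conditionally on $\Fcal_{n-1}$, the observation $Y_n$ has distribution $Q_{0,\theta_{n-1}^*}$ under $\Qbb_0^*$, so the conditional mean of the increment is exactly the sample-wise KL divergence, which is uniformly bounded by the worst-case value $\DKLmax{Q_0}{Q_1}$.

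Second, I would apply optional stopping to the zero-mean martingale $M_n := \sum_{k=1}^n \bigl(Z_k - E_{\Qbb_0^*}[Z_k \mid \Fcal_{k-1}]\bigr)$. The hypothesis $E_{\Qbb_0^*}[\tau] < \infty$, combined with the boundedness of $|Z_n|$ (which follows from the compactness of $\Theta$, the finiteness of $\Kcal$, and the assumption $\DKLmax{Q_0}{Q_1} < \infty$, the last of which implies $Q_{0,\theta} \ll Q_{1,\theta}$ uniformly in $\theta$), lets Doob's optional stopping theorem yield $E_{\Qbb_0^*}[M_\tau] = 0$. This means the expected stopped sum of increments equals the expected sum of conditional means, which in turn is upper bounded by $E_{\Qbb_0^*}[\tau] \cdot \DKLmax{Q_0}{Q_1}$. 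Dividing by $\DKLmax{Q_0}{Q_1}$ yields \eqref{eq:tau_bound_0}.

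The main obstacle is the absence of the i.i.d.\ structure that underlies Wald's original lemma: here the increments $Z_n$ are neither independent nor identically distributed, since the quantization parameters are chosen adaptively. The martingale formulation sidesteps this difficulty, but it shifts the burden onto verifying the integrability conditions for optional stopping. Fortunately, the quantized setting---a finite alphabet combined with a compact parameter space---guarantees that the increments are uniformly bounded whenever the bound is non-trivial, so no additional regularity assumptions are needed beyond what is already built into the problem formulation.
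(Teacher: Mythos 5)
Your argument is correct in substance but follows a genuinely different route from the paper. You decompose the stopped log-likelihood ratio into adapted increments, observe that the conditional mean of each increment given the past is a per-sample KL divergence bounded by $\DKLmax{Q_0}{Q_1}$, and then invoke Doob's optional stopping theorem for the compensated martingale. The paper instead follows Wald's original truncation device: it fixes a horizon $N$, splits $\sum_{n=1}^{N} S_{n,\theta_n^*}$ into the pre- and post-stopping parts, exploits the freedom to redefine the quantization rule after stopping so that each post-stopping increment has mean exactly $\DKLmax{Q_0}{Q_1}$, bounds the full-horizon expectation by $N\,\DKLmax{Q_0}{Q_1}$, subtracts, and lets $N \to \infty$. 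The two proofs are descendants of the same Wald identity; yours is shorter and dispenses with the slightly artificial ``fill in the quantizer after stopping'' step, while the paper's avoids any appeal to martingale theory and stays entirely at the level of elementary expectations, which is closer in spirit to \cite{Wald1945}.

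One step of yours deserves repair: the claim that $\lvert Z_n \rvert$ is \emph{uniformly bounded} does not follow from $\DKLmax{Q_0}{Q_1} < \infty$ together with compactness of $\Theta$ and finiteness of $\Kcal$. Since each summand of the KL divergence satisfies $p\log(p/q) \geq -1/e$, finiteness of the divergence controls the positive excursions of $p\log(p/q)$ but not of $\log(p/q)$ itself, and it controls the negative excursions of $\log(p/q)$ not at all: one can have $Q_{0,\theta}(k)$ arbitrarily small with $Q_{1,\theta}(k)$ of order one, so that $Z_n$ is unbounded below while every KL divergence stays finite. The fix is standard: optional stopping (in the form of the generalized Wald identity) only requires $E_{\Qbb_0^*}\bigl[\lvert Z_n\rvert \,\big\vert\, \Fcal_{n-1}\bigr] \leq C$ almost surely together with $E_{\Qbb_0^*}[\tau] < \infty$, and here one has
\begin{equation*}
  E_{\Qbb_0^*}\bigl[\lvert Z_n\rvert \,\big\vert\, \Fcal_{n-1}\bigr] \;\leq\; \DKLmax{Q_0}{Q_1} + \frac{2K}{e},
\end{equation*}
because the sum of the absolute values of the summands exceeds the KL divergence by at most twice the sum of the negative parts, each of which is at least $-1/e$. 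With that substitution your proof is complete.
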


\begin{proof}
  Lemma~\ref{lm:tau_bound} can be shown by following essentially the same arguments as those in the proof of Lemma~2 in \cite{Wald1945}. Again, only the first inequality is shown in detail. The second can be shown by following the same steps under $\Hcal_1$ instead of $\Hcal_0$.
  
  Consider a truncated test of maximum length $N > 0$ (the nontruncated case will later be recovered by taking the limit $N \to \infty$) and let
  \begin{equation}
    S_{n,\theta} \coloneqq - \log \ell_{\theta}(Y_n) = \log \frac{Q_{0,\theta}(Y_n)}{Q_{1,\theta}(Y_n)}.
  \end{equation} 
  It then holds that
  \begin{align}
    &E_{\Qbb_0^*}\Biggl[ \sum_{n=1}^N S_{n,\theta_n^*} \Biggr] \notag \\
    &\quad= E_{\Qbb_0^*}\Biggl[ \sum_{n=1}^{\tau_N} S_{n,\theta_n^*} \Biggr] + E_{\Qbb_0^*}\Biggl[ \sum_{n=\tau_N+1}^N S_{n,\theta_n^*} \Biggr] \\
    &\quad= E_{\Qbb_0^*}\Biggl[ \log \frac{\Qbb_0^*(\bm{Y}_{\tau_N})}{\Qbb_1^*(\bm{Y}_{\tau_N})} \Biggr] + E_{\Qbb_0^*}\Biggl[ \sum_{n=\tau_N+1}^N S_{n,\theta_n^*} \Biggr],
    \label{eq:llr_sum}
  \end{align}
  where $\tau_N$ denotes the stopping time of the truncated test and $\theta_n^*$ the quantization parameter chosen according to $\eta^*$. For all $n > \tau_N$, the quantization rule $\eta^*$ can be chosen arbitrarily since the test has already stopped. Hence, without loss of optimality, we can choose $\eta^*$ such that it maximizes the expected value of $S_n$ under $Q_{0,\theta}$, that is,
  \begin{align}
    E_{Q_{0,\theta_n^*}} \bigl[ S_{n,\theta_n^*} \bigr] &= \max_{\theta \in \Theta} \; E_{Q_{0,\theta}} \bigl[ S_{n,\theta} \bigr] \\
    &= \max_{\theta \in \Theta} \; \DKL{Q_{0,\theta}}{Q_{1,\theta}} \\
    &= \DKLmax{Q_0}{Q_1}
  \end{align}
  for all $n > \tau_N$. Therefore, it holds that 
  \begin{align}
    &E_{\Qbb_0^*}\Biggl[ \sum_{n=\tau_N+1}^N S_{n,\theta_n^*} \Biggr] \notag \\
    &\quad= E_{\Qbb_0^*}^*\bigl[ N - \tau_N \bigr] \DKLmax{Q_0}{Q_1} \\
    &\quad= N \DKLmax{Q_0}{Q_1} - E_{\Qbb_0^*}\bigl[ \tau_N \bigr] \DKLmax{Q_0}{Q_1},
    \label{eq:llr_post_stopping}
  \end{align}
  where the expected value factors since all $S_{n,\theta_n^*}$, $n > \tau_N$, are independent of $\tau_N$. Now, for the term on the left-hand side of \eqref{eq:llr_sum} it holds that
  \begin{align}
    E_{\Qbb_0^*}\Biggl[ \sum_{n=1}^N S_{n,\theta_n^*} \Biggr] &= \sum_{n=1}^N E_{Q_{0,\theta_n^*}} \bigl[S_{n,\theta_n^*} \bigr] \\
    &\leq N  \max_{\theta \in \Theta} \; E_{Q_{0,\theta}} \bigl[ S_{n,\theta} \bigr] \\
    &= N \DKLmax{Q_0}{Q_1}.
    \label{eq:llr_total}
  \end{align}
  Combining \eqref{eq:llr_post_stopping} and \eqref{eq:llr_total} yields
  \begin{align}
    E_{\Qbb_0^*}\bigl[ \tau_N \bigr] \DKL{Q_0}{Q_1} &\geq E_{\Qbb_0^*}\Biggl[ \log \frac{\Qbb_0^*(\bm{Y}_{\tau_N})}{\Qbb_1^*(\bm{Y}_{\tau_N})} \Biggr].
  \end{align}
  Since $N$ can be chosen arbitrary, this argument naturally extends to the case $N \to \infty$ and hence $\tau_N \to \tau$. This proves the first statement in the lemma. The second statement can be proven analogously by defining $S_n \coloneqq \log \ell_\theta(Y_n)$ and taking all expected values under $\Qbb_1^*$.
\end{proof}

The theorem can now be shown by combining Lemma~\ref{lm:cond_llr_bounds} and Lemma~\ref{lm:tau_bound}. Namely, by Lemma~\ref{lm:cond_llr_bounds} it holds that
\begin{align}
  &\!\!E_{\Qbb_0^*}\biggl[ \log \frac{\Qbb_0^*(\bm{Y}_\tau)}{\Qbb_1^*(\bm{Y}_\tau)} \biggr] \\
  &= E_{\Qbb_0^*}\bigl[\delta_\tau^*(\bm{Y})\bigr] \; E_{\Qbb_0^*}\biggl[ \log \frac{\Qbb_0^*(\bm{Y}_\tau)}{\Qbb_1^*(\bm{Y}_\tau)} \,\Big\vert\, \delta_\tau^*(\bm{Y}) = 1 \biggr] \\
  &\quad + E_{\Qbb_0^*}\bigl[1 - \delta_\tau^*(\bm{Y}) \bigr] \; E_{\Qbb_0^*}\biggl[ \log \frac{\Qbb_0^*(\bm{Y}_\tau)}{\Qbb_1^*(\bm{Y}_\tau)} \,\Big\vert\, \delta_\tau^*(\bm{Y}) = 0 \biggr] \\
  &\geq \alpha \log \frac{\alpha}{1-\beta} + (1-\alpha) \log \frac{1-\alpha}{\beta} \\
  &= \DKL{\alpha}{\beta}
  \label{eq:llr0_bound}
\end{align} 
and 
\begin{align}
  &\!\!E_{\Qbb_1^*}\biggl[ \log \frac{\Qbb_1^*(\bm{Y}_\tau)}{\Qbb_0^*(\bm{Y}_\tau)} \biggr] \\
  &= E_{\Qbb_1^*}\bigl[1 - \delta_\tau^*(\bm{Y})\bigr] \; E_{\Qbb_1^*}\biggl[ \log \frac{\Qbb_1^*(\bm{Y}_\tau)}{\Qbb_0^*(\bm{Y}_\tau)} \,\Big\vert\, \delta_\tau^*(\bm{Y}) = 0 \biggr] \\
  &\quad + E_{\Qbb_1^*}\bigl[\delta_\tau^*(\bm{Y})\bigr] \; E_{\Qbb_1^*}\biggl[ \log \frac{\Qbb_1^*(\bm{Y}_\tau)}{\Qbb_0^*(\bm{Y}_\tau)} \,\Big\vert\, \delta_\tau^*(\bm{Y}) = 1 \biggr] \\
  &\geq \beta \log \frac{\beta}{1-\alpha} + (1-\beta) \log \frac{1-\beta}{\alpha} \\
  &= \DKL{\beta}{\alpha}
  \label{eq:llr1_bound}
\end{align} 
Plugging \eqref{eq:llr0_bound} and \eqref{eq:llr1_bound} back into \eqref{eq:tau_bound_0} and \eqref{eq:tau_bound_1}, respectively, yields the bounds in Theorem~\ref{th:asn_bound_kl} for the cases $\kappa = 0$ and $\kappa = 1$. For $\kappa \in (0,1)$ it holds that
\begin{align}
  \ASN_{\kappa,\Theta}^*(\alpha, \beta) &= E_{\Qbb_{\kappa}^*}[\tau]\\
  &= (1-\kappa) E_{\Qbb_{0}^*}[\tau] + \kappa E_{\Qbb_{0}^*}[\tau]. \label{eq:asn_kappa}
\end{align}
The bound in Theorem~\ref{th:asn_bound_kl} follows.

\section{Proof of Theorem~\ref{th:asn_bound_tv}}
\label{apx:asn_bound_min}

The proof largely follows the steps of Hoeffding's original proof for the i.i.d.\ case; compare \cite[Sec.~3]{Hoeffding1960}. Using the same notation as in the proof of Theorem~\ref{th:asn_bound_kl}, it holds that
\begin{align}
  \alpha + \beta &\geq E_{\Qbb_0^*}\bigl[ \delta_\tau^*(\bm{Y}) \bigr] + E_{\Qbb_1^*}\bigl[ 1 -\delta_\tau^*(\bm{Y}) \bigr] \\
  &= E_{\Qbb_0^*}\bigl[ \delta_\tau^*(\bm{Y}) \bigr] + E_{\Qbb_0^*}\biggl[ \bigl(1 -\delta_\tau^*(\bm{Y}) \bigr) \frac{\Qbb_1^*(\bm{Y}_\tau)}{\Qbb_0^*(\bm{Y}_\tau)} \biggr] \label{eq:ref_measure} \\
  &= E_{\Qbb_0^*}\biggl[ \delta_\tau^*(\bm{Y}) + \bigl(1 -\delta_\tau^*(\bm{Y}) \bigr) \frac{\Qbb_1^*(\bm{Y}_\tau)}{\Qbb_0^*(\bm{Y}_\tau)} \biggr] \\
  &\geq E_{\Qbb_0^*}\biggl[ \min\biggl\{ 1 \,,\, \frac{\Qbb_1^*(\bm{Y}_\tau)}{\Qbb_0^*(\bm{Y}_\tau)} \biggr\} \biggr] \label{eq:min_sum} \\
  &= E_{\Qbb_0^*}\Biggl[ \min\biggl\{ 1 \,,\, \prod_{n=1}^\tau \frac{Q_{1,\theta_n^*}(Y_n)}{Q_{0,\theta_n^*}(Y_n)} \Biggr\} \biggr] \\
  &\geq E_{\Qbb_0^*}\biggl[\prod_{n=1}^\tau \min\biggl\{ 1 \,,\, \frac{Q_{1,\theta_n^*}(Y_n)}{Q_{0,\theta_n^*}(Y_n)} \biggr\} \biggr] \\
  &= E_{\Qbb_0^*}\Biggl[\prod_{n=1}^\tau r_{\theta_n^*}(Y_n) \Biggr],
  \label{eq:sum_err_bound}
\end{align} 
where \eqref{eq:min_sum} follows from the fact that $\delta_\tau^* \in \{0,1\}$, and we defined
\begin{equation}
  r_{\theta}(y) \coloneqq \min\biggl\{ 1 \,,\, \frac{Q_{1,\theta}(y)}{Q_{0,\theta}(y)} \biggr\}.
\end{equation}
Since $r_{\theta}(y) \leq 1$ for all $\theta \in \Theta$ and all $y \in \Kcal$, it holds that
\begin{align}
  E_{\Qbb_0^*}\Biggl[\, \prod_{n \leq \tau} r_{\theta_n^*}(Y_n) \Biggr] &\geq E_{\Qbb_0^*}\Biggl[1 - \sum_{n \leq \tau} \bigl(1-r_{\theta_n^*}(Y_n)\bigr) \Biggr] \\
  &= 1 - E_{\Qbb_0^*}\Biggl[\, \sum_{n \leq \tau} \bigl(1- r_{\theta_n^*}(Y_n) \bigr) \Biggr].
  \label{eq:product_bound}
\end{align}
Setting $S_{n,\theta} \coloneqq 1- r_{\theta}(Y_n)$ and using the same arguments as in the proof of Theorem~\ref{th:asn_bound_kl}, it can be shown that 
\begin{multline}
  E_{\Qbb_0^*}\Biggl[\, \sum_{n \leq \tau} \bigl(1- r_{\theta_n^*}(Y_n) \bigr) \Biggr] \\ 
  \leq E_{\Qbb_0^*}\bigl[ \tau \bigr] \max_{\theta \in \Theta} \; E_{Q_{0,\theta}} \bigl[1- r_{\theta}(Y) \bigr] \label{eq:tau_bound_min}
\end{multline}
Combining \eqref{eq:sum_err_bound}, \eqref{eq:product_bound}, and \eqref{eq:tau_bound_min} yields
\begin{align}
  \alpha + \beta &\geq 1 - E_{\Qbb_0^*}\bigl[ \tau \bigr] \left( 1- \min_{\theta \in \Theta} \; E_{Q_{0,\theta}}\bigl[ r_{\theta}(Y) \bigr] \right) \\
  E_{\Qbb_0^*}\bigl[ \tau \bigr] &\geq \frac{1 - \alpha - \beta}{\displaystyle \max_{\theta \in \Theta} \; E_{Q_{0,\theta}} \bigl[1- r_{\theta}(Y) \bigr]} \\
  &= \frac{\DTV{\alpha}{\beta}}{\displaystyle \max_{\theta \in \Theta} \; E_{Q_{0,\theta}} \bigl[1- r_{\theta}(Y) \bigr]}
\end{align}
Finally, by definition of $r_{\theta}$, it holds that
\begin{align}
  E_{Q_{0,\theta}} \bigl[1- r_{\theta}(Y) \bigr] &= 1 - \sum_{k \in \Kcal} \min\biggl\{ 1 \,,\, \frac{Q_{1,\theta}(k)}{Q_{0,\theta}(k)} \biggr\} Q_{0,\theta}(k) \notag \\
  &= 1 - \sum_{k \in \Kcal} \min\bigl\{ Q_{0,\theta}(k) \,,\, Q_{1,\theta}(k) \bigr\} \label{eq:cancel} \\
  &= \frac{1}{2} \sum_{k \in \Kcal} \lvert Q_{0,\theta}(k) - Q_{1,\theta}(k) \rvert \label{eq:min_abs} \\
  &= \DTV{Q_{0,\theta}}{Q_{1,\theta}},
  \label{eq:r_min}
\end{align}
where \eqref{eq:min_abs} holds since $2 \min\{a,b\} = a + b - \lvert a - b \rvert$ for any two scalars $a$ and $b$. This completes the proof of the statement in the theorem for $\kappa = 0$. In order to obtain the bound for $\kappa = 1$, $\Qbb_1^*$ instead of $\Qbb_0^*$ needs to be chosen as the common background measure in \eqref{eq:ref_measure}. However, since the background measure cancels out in \eqref{eq:cancel}, both choices lead to the same bound. The bound then follows from \eqref{eq:asn_kappa}. This completes the proof.

\section{Proof of Theorem~\ref{th:bayes_bound}}
\label{apx:bayes_bound}

Let $\mathbb{Q}_0^*$ and $\mathbb{Q}_1^*$ be the distributions induced by the quantization rule $\eta^*$ that solves the optimization problem in \eqref{eq:seq_test}, and let $\delta^*$ denote the corresponding optimal decision rule. It then holds that
\begin{align}
  J_{\kappa, \Theta}^*(\bm{\lambda}) &= E_{\Qbb_{\kappa}^*}\bigl[\tau\bigr] + \lambda_0 E_{\Qbb_0^*}\bigl[ \delta_\tau^*(\bm{Y}) \bigr] + \lambda_1 E_{\Qbb_1^*}\bigl[ 1-\delta_\tau^*(\bm{Y}) \bigr] \notag \\
  &= \ASN_{\kappa, \Theta}^*(\alpha^*, \beta^*) +  \lambda_0 \alpha^* + \lambda_1 \beta^*,
\end{align}
where $\alpha^*$ and $\beta^*$ denote the error probabilities of the optimal test for the given cost coefficients $\lambda$. From Theorem~\ref{th:asn_bound_kl} it now follows that
\begin{align}
  J_{\kappa, \Theta}^*(\bm{\lambda}) &\geq \ASN^\text{KL}_{\kappa, \Theta}(\alpha^*, \beta^*) +  \lambda_0 \alpha^* + \lambda_1 \beta^* \\
  &\geq \min_{\alpha, \beta \in [0,1]} \; \ASN^\text{KL}_{\kappa, \Theta}(\alpha, \beta) +  \lambda_0 \alpha + \lambda_1 \beta,
  \label{eq:j_bound}
\end{align}
which is the statement in the theorem. The fact that the minimum exists and is unique follows from the fact that $\DKL{p}{q}$ is jointly and strictly convex in both arguments and that $[0,1]^2$ is a compact set.

\ifCLASSOPTIONcaptionsoff
  \newpage
\fi

\bibliographystyle{IEEEtran}
\bibliography{refs}

\end{document}